\let\emptyset\varnothing
\let\implies\Rightarrow
\let\sim\prec
\newcommand{\B}{\mathbb{B}}
\newcommand{\ress}[1]{\bigl\lvert_{#1}}
\newcommand{\dom}{\mathrm{dom}}
\newcommand{\bigWiring}{\bigcup}
\newenvironment{sproof}{%
  \proof}{\endproof}
\begin{document}
\title{A framework for (de)composing with Boolean automata networks}

\author{
        K{\'e}vin Perrot\inst{1} \and
        Pac{\^o}me Perrotin\thanks{Corresponding author: 
        	\url{pacome.perrotin@lis-lab.fr}.}\inst{1} \and
        Sylvain Sen{\'e}\inst{1}
}

\date{}

\institute{
        Aix-Marseille Univ., Toulon Univ., CNRS, LIS, Marseille, France
}

\maketitle

\begin{abstract}
	Boolean automata networks (BANs) are a generalisation of Boolean cellular 
	automata. In such, any theorem describing the way BANs compute information is a 
	strong tool that can be applied to a wide range of models of computation. In this 
	paper we explore a way of working with BANs which involves adding external inputs 
	to the base model (via modules), and more importantly, a way to link
	networks together using the above mentioned inputs (via wirings). Our aim is to
	develop a powerful formalism for BAN (de)composition. We formulate two results: 
	the first one shows that our modules/wirings definition is complete; the
	second one uses modules/wirings to prove simulation results amongst BANs.
\end{abstract}

\begin{keywords}
	Boolean automata networks, modules, wirings, simulation.
\end{keywords}


\section{Introduction}

Boolean automata networks (BANs) can be seen as a generalisation of cellular 
automata that enables the creation of systems composed of Boolean functions over any 
graph, while cellular automata only operate over lattices of any dimension. The study 
of the dynamics of a BAN, that describes the set of all computations possible in such 
a system, is a wide and complex subject. From very simple networks computing simple 
Boolean functions to possibly infinite networks able to simulate any Turing machine, 
the number of configurations always grows exponentially with the size of the network, 
making any exhaustive examination of its dynamics impractical. The study of such 
dynamics is nevertheless an important topic which can impact other fields. BANs are 
for example used in the study of the dynamics of gene regulatory 
networks~\cite{J-Demongeot2010,J-Kauffman1969,J-Thomas1973} in biology.

Many efforts to characterise the dynamics of BANs have already been put forward. For 
example, some studies~\cite{C-Alcolei2016,T-Noual2012} examine the behaviour of 
networks composed of interconnected cycles. The modularity of BANs has been studied 
from multiple perspectives. In particular from a static point of 
view~\cite{J-Alon2003,J-Milo2002}, and a functional 
one~\cite{J-Bernot2009,C-Delaplace2012,C-Siebert2009}. In this paper, we explore a 
compositional approach to BANs that allows to decompose a BAN into subnetworks called 
modules, and to compose modules together in order to form larger networks. We define 
a module as a BAN on which we add external inputs. These inputs are used to 
manipulate the result of the network computation by adding extra information. They 
can also be used to interconnect multiple modules, making more complex networks. 
Those constructions resemble the circuits described in Feder's 
thesis~\cite{T-Feder1990}, and modules can be seen as a generalisation of circuits 
over any update mode.

Section~\ref{s:motivations} discusses the possible motivations for a
(de)compositional study of BANs.
Section~\ref{s:def} introduces BANs and update modes, and Sections~\ref{s:modules} 
and~\ref{s:wirings} develop a formalism for the modular study of BANs, justified by a 
first theorem showing that any network can be created with modules and wirings. We 
also present an application of our definitions to BAN simulation in Section 
\ref{s:simulation}, leading to a second theorem stating that composing
with local 
simulations is sufficient to (globally) simulate a BAN. Finally, Section~\ref{s:examples}
presents and analyses two illustrations of the principles presented in
Section~\ref{s:motivations}.

The demonstrations of all 
results are given in appendix.

\section{Motivations}
\label{s:motivations}


BANs, despite being very simply defined locally, become complex to analyse as
the representation of their dynamics grows exponentially in the size of their
networks. BANs have been proven to be Turing-complete~\cite{M-Cook2004} and as
most of Turing-complete systems are able to show complex and emergent properties.

Yet, an important number of networks can be partially understood when viewed
through the lens of functionality (what an object is meant to achieve).
Functionality enables to use
abstraction to reduce the considered network (or some part of it) to the
computation of a function or the simulation of a dynamical system.
Assuming a functionality of the parts of a network can let us
conclude on the functionality of the network itself, at the cost of
letting aside an absolute characterisation of its dynamics (which is often
practically impossible). Such a functional interpretation aims at offering the possibility to make
verifiable predictions in a short amount of time.

It is not known if every Boolean automata network can be cut into a
reasonable amount of parts to which one can easily affect a functionality.
We will justify our present argument by illustrating it in
Section~\ref{s:examples}.

\section{Boolean automata networks} 
\label{s:def}

\subsection{Preliminary notations}

Let us first describe some of the notations used throughout 
the paper. Let $f: A \to B$ be a mapping from set $A$ to set $B$. For $S \subseteq A$ 
we denote $f(S) = \{ b \in B \mid \exists a \in S, f(a) = b\}$. We denote $f 
\ress{S}$ the restriction of $f$ to the domain $S$, $f\ress{S}: S \to B$ such that 
$f\ress{S}(a) = f(a)$ for all $a \in S$. Let $\dom(f)$ be the domain of $f$, and $g \circ f$ the 
composition of $f$ then $g$. For $f$ and $g$ two functions with disjoint domains of 
definition, we define $f \sqcup g$ as the function defined such that :
\begin{equation*}
f \sqcup g(x)=\begin{array}\{{ll}.
  f(x) & \text{ if } x \in \dom(f)\\[.5em]
  g(x) & \text{ if } x \in \dom(h)
\end{array}\text{.}
\end{equation*}
We denote $\B = \{0, 1\}$ the set of Booleans. For $K$ a sequence of $m$ elements, 
the sub-sequence from the $i$-th element to the $j$-th element is denoted 
$K_{[i, j]}$. We sometimes define functions without naming them with the notation
$a \mapsto b$, signifying that for any input $a$ the function will return $b$.
For example, the function $n \mapsto 2 \times n$ is a function that takes a
number $n$ and returns the value of $n$ multiplied by $2$.

\subsection{Definitions}

A BAN is based upon a set of automata. Each automaton is defined as a Boolean 
function, with arity the size of the network. Each variable of the function of each 
automaton is meant to correspond to an automaton in the network. By considering a 
configuration of Boolean values over this network, we can compute the Boolean 
function of each automaton and obtain a Boolean value for each automaton (\emph{i.e.} 
a local state). These values can be used to update the global state of the network, 
that we call a configuration. If we decide to update the value of each automaton at 
once, the update mode is parallel. However, if only one automaton is updated at each 
time step, 
the update mode is sequential~\cite{J-Fogelman1983,L-Robert1986}.

\begin{definition}
	A \emph{configuration} on a set $S$ is a function $x: S \to \B$.
\end{definition}

A BAN $F$ defined over the set $S$ associates a Boolean function to each element
of $S$. Each of theses functions is defined from the set of all configurations
of the BAN, $S \to \B$, to the Boolean set, $\B$.

\begin{definition}
	For $S$ a set, a \emph{Boolean automata network} (BAN) $F$ is a function
$F : S \to ( S \to \B ) \to \B$.
\end{definition}

For each $s \in S$, we denote $f_s = F(s)$ the local function of automaton $s$.

For $s \in S$ we denote $x_s = x(s)$. A function $x$ is a configuration at a given 
time over the network. Thus, we can define our function $f_s$ to be part of the set 
$(S \to \B) \to \B$. This way, a BAN $F$ can be defined as a function from the set 
$S$ to the set $(S \to \B) \to \B$. We find again that the set of all BANs
over $S$ can 
simply be defined
as $S \to (S \to \B) \to \B $. For any BAN $F$ and configuration 
$x$, we can define the configuration which is computed by $F$ from $x$. A naive way 
to do so would be to define $x' = F(x)$ such that $x'_s = f_s(x)$ for every $s$; this 
definition however is very limiting: it only allows parallel updates of our system. 
In a general definition of BANs, a computation of a BAN should allow updates of only 
a subset of the functions of the network. Slight changes to the update mode of a BAN
can deeply change its computational capabilities~\cite{J-Aracena2013,J-Goles2008}. 
Most results that assume a parallel update mode cannot be applied to a sequential 
network; the reciprocal is also true. We set the following definition of an update 
over our BAN to be as general as possible.

\begin{definition}
	Any $\delta \subseteq S$ is an \emph{update} over $S$.
\end{definition}

One can apply multiple consecutive updates to a BAN to effectively execute the BAN 
over an update mode. An \emph{update mode} is simply a sequence of updates that is 
denoted $\Delta$, where $\Delta_k$ is the $k^\text{th}$ update of the sequence. 
We define the union operator between updates modes as it will be useful for the proof of our last theorem.

\begin{definition}
	Let $\Delta$, $\Delta'$ be two update modes over a set $S$. The \emph{union} of 
	$\Delta$ and $\Delta'$ denoted $\Delta \cup \Delta'$ is the update mode defined 
	as $(\Delta \cup \Delta')_k = \Delta_k \cup \Delta'_k$. The size of $\Delta \cup 
	\Delta'$ is the maximum among the sizes of $\Delta$ and $\Delta'$.
\end{definition}

We assume that $\Delta_k = \emptyset$ if $k$ is greater than the size of $\Delta$. 
Given an update $\delta$, we can define the endomorphism $F_\delta$ over the set of 
all configurations. For every configuration $x$, we set $F_\delta(x)(s) = f_s(x)$ if 
$s \in \delta$, and $F_\delta(x)(s) = x(s)$ if $s \notin \delta$. In other words, the 
value of $s$ in the new configuration is set to $f_s(x)$ only if $s \in \delta$, 
otherwise the Boolean affectation of $s$ remains $x_s$. Now, we can define the 
execution of $F$ in a recursive way.

\begin{definition}
	The \emph{execution} of $F$ over $x$, under the update mode $\Delta$, is 
the function $F_\Delta : (S \to \B) \to (S \to \B)$ defined 
	as $F_{\Delta[1, k]}(x) = F_{\Delta_k} ( F_{\Delta[1, k-1]}(x))$, with 
	$F_{\Delta[1, 1]}(x) = F_{\Delta_1}(x)$.
\end{definition}

Throughout this paper we represent BANs as graphs called interaction graphs.
Interaction graphs are a classical tool in the study of BANs. For a BAN
$F$ defined over $S$, the interaction graph of $F$ is the oriented graph
$G = (S, \epsilon)$, where $(s, s') \in \epsilon$ if and only if the
variable $x_s$ influences the computation of the function $F(s')$.

\section{Modules} 
\label{s:modules}

Modules are BANs with external inputs. Such inputs can be added to any local function 
of a module, and any local function of a module can have multiple inputs. When a 
local function has $n$ inputs, the arity of this function is increased by $n$. These 
new parameters are referred to by elements in a new set $E$: the elements of $E$ 
describe the inputs of the module; those of $S$ describe the internal elements of the 
module. To declare which input $e \in E$ is affected to each function $f_s$, we use 
function $\alpha$.

\begin{definition}
	Let $S$ and $E$ be two disjoint sets. An \emph{input declaration} over $S$ and 
	$E$ is a function $\alpha: S \rightarrow \mathcal{P}(E)$ such that 
	$\{ \alpha(s) \mid s \in S \}$ is a partition of $E$.
\end{definition}

For each $s$, $\alpha(s)$ is the set of all external inputs of function $f_s$. 
The partition property is important because without it, some
input could be assigned to multiple nodes, or to no node at all, which
is contrary to our vision of input. To simplify notations, we sometimes denote 
$E_s = \alpha(s)$. Now, let us explicit the concept of a module.

\begin{definition}
	A \emph{module} $M$ over $(S, E, \alpha)$
is defined such that, for each $s \in S$, $M(s)$ is a function $M(s):
(S \cup E_s) \to \B$. 
\end{definition}

If $M$ is a module defined over $(S, \emptyset, s \mapsto \emptyset)$, $M$ is also a 
BAN. To compute anything over this new system, we need a configuration $x: S \to 
\B$ and a configuration over the elements of $E$.

\begin{definition}
	An \emph{input configuration} over $E$ is a function $i: E \to \B$.
\end{definition}

Let $x$ be a configuration over $S$, and $i$ an input configuration over $E$. As
$x$ and $i$ are defined over disjoint sets, we define $x \sqcup i$ as their union. 
Such an union, coupled with an update over $S$, is enough information to perform a 
computation over this new model.

\begin{definition}
	Let $x$ be a configuration over $S$ and $i$ an input over $E$. Let $\delta$ be an 
	update over $S$. The \emph{computation} of $M$ over $x$, $i$ and $\delta$, 
	denoted $M_\delta(x \sqcup i)$, is the configuration over $S$ such that 
	$M_\delta(x \sqcup i)(s) = f_s(x \sqcup i\ress{E_s})$ for each $s \in \delta$, 
	and $M_\delta(x \sqcup i)(s) = x(s)$ for every $s \in S \setminus \delta$.
\end{definition}

In the following example, we assume a total order over $S \cup E$, allowing us to 
intuitively write configurations as binary words. For example, $x = 101$ means $x(a) = 1$, $x(b) = 0$ and $x(c) = 1$.

\begin{figure}[t!] 
	\begin{center}
		\begin{tikzpicture}[->,>=stealth',shorten
>=1pt,auto,node distance=2cm, semithick, initial text=,inner sep=0pt, minimum
size=0pt] 
			\node[state] (A) {a}; 
			\node[state] (B) [right of=A] {b};
			\node[state] (C) [right of=B] {c};

			\path (B) edge (A) (C) edge (B); 
			\path (B) edge [loop below] (B); 
			\path (C) edge [loop above] (C);

			\draw (-0.5, 0.5) -- (A); 
			\draw (0, 0.7) -- (A); 
			\draw (0.5, 0.5) -- (A); 
			\node at (-0.6, 0.6) {$a_1$}; 
			\node at (0, 0.8) {$a_2$}; 
			\node at (0.6, 0.6) {$a_3$};

			\draw (1.7, 0.65) -- (B); 
			\draw (2.3, 0.65) -- (B); 
			\node at (1.7, 0.8) {$b_1$}; 
			\node at (2.33, 0.8) {$b_2$};

			\draw (4.7, 0) -- (C); 
			\node at (4.87, -0.02) {$c_1$};
		\end{tikzpicture}
	\end{center} 
	\caption{Interaction graph of the module detailed in Example~\ref{ex:1}.}
	\label{figMod}
\end{figure}
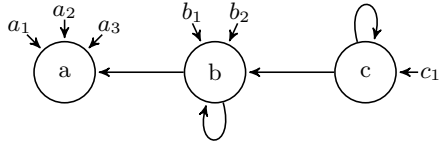

\begin{example}\label{ex:1}
	$S = \{a, b, c\}$, and $E = \{a_1, a_2, a_3, b_1, b_2, c_1\}$. We define $\alpha$ 
	such that $\alpha(a) = \{a_1, a_2, a_3\}$, $\alpha(b) = \{b_1, b_2\}$ and 
	$\alpha(c) = \{c_1\}$. Let $M$ be a module over $(S,E,\alpha)$, such that $M(a) =
	x_b \vee a_1 \vee a_2 \vee a_3$, $M(b) = \neg x_b \vee x_c \vee \neg b_1 \wedge
	b_2$, and $M(c) = \neg c_1$. Let $x = 101$, $i = 000010$ and $\delta = \{a, b\}$.
	We get that $M_\delta(x \sqcup i) = M_{\{a, b\}}(101 \sqcup 000010)$ is such that 
	$M_\delta(x \sqcup i)(a) = f_a(x \sqcup i\ress{E_a}) = 0$, $M_\delta(x \sqcup i)
	(b) = f_b(x \sqcup i\ress{E_b}) = 1$, and $M_\delta(x \sqcup i)(c) = x(c) = 1$. 
	Therefore $M_\delta(x \sqcup i) = 011$. A representation of this module is 
	pictured in Figure~\ref{ex:1}.
\end{example}

Let us now define executions, while considering that the input configuration can 
change over time.

\begin{definition}
	Let $t > 1$. Let $I = (i_1, i_2, \ldots, i_{t - 1})$ be a sequence of input 
	configurations over $E$, $X = (x_1, x_2, \ldots, x_t)$ a sequence of
	configurations over $S$, and $\Delta$ an update mode over $S$ of size $t$. 
	$(X, I, \Delta)$ is an \emph{execution} of $M$ if for all $1 \leq k < t$, $x_{k + 
	1} = M_{\Delta_k}(x_k \cup i_k)$.
\end{definition}

This definition allows for variation over the inputs over time. As this 
particular feature is not needed throughout this paper, we also propose a
simpler definition of executions over modules which only allows
fixed input values over time.

\begin{definition}
	Let $i$ be an input configuration over $E$. The \emph{execution} of $M$ over $x 
	\cup i$ with update mode $\Delta$ is an endomorphism over the set of all 
	configurations, denoted $M_\Delta$. It is defined as $M_{\Delta[1, k]}(x \sqcup 
	i) = M_{\Delta_k} ( M_{\Delta[1, k - 1]}(x \sqcup i) \sqcup i)$, with 
	$M_{\Delta[1, 1]}(x \sqcup i) = M_{\Delta_1}(x \sqcup i)$.
\end{definition}

\section{Wirings} \label{s:wirings}

The external inputs of a module can be used to encode any information. For instance, 
we could encode any periodic (or non-periodic) sequence of Boolean words into the 
inputs of a given module. We could also encode the output of a given BAN or module, 
combining in some way the computational power of both networks. Such a composition of 
modules is captured by our definition of wirings. A wiring is an operation that links
together different inputs and automata from one more or modules, thus forming 
bigger and more complex modules.

We decompose this compositional process into two different families of operators: the 
non-recursive and the recursive wirings. The first ones connect the automata of one 
module to the inputs of another; the second ones connect the automata of a module to 
its own inputs. 
A wiring, recursive or not, is defined by a partial map $\omega$ linking some inputs 
to automata. Let us first define non-recursive wirings.
\begin{definition}
	Let $M$, $M'$ be modules defined over $(S, E, \alpha)$ and $(S', E', \alpha')$ 
	respectively, such that $S, S'$ and $E, E'$ are two by two disjoint. A 
	\emph{non-recursive wiring} from $M$ to $M'$ is a partial map $\omega$ 
	from $E'$ to $S$.
\end{definition}
The new module result of the non-recursive wiring $\omega$ is denoted $M 
\rightarrowtail_\omega M'$ and is defined over $(S \cup S', E \cup E' \setminus 
\dom(\omega), \alpha_\omega)$. The input declaration of $M \rightarrowtail_\omega M'$ is
$\alpha_\omega(s) = \alpha(s) \setminus \dom(\omega)$ (in particular, 
$\alpha_\omega(s)=\alpha(s)$ if $s \in S$). Given $s \in S \cup S'$, the local 
function $M \rightarrowtail_\omega M'(s)$, denoted $f^\omega_s$, is defined as
\begin{equation*}
	f^\omega_s(x \sqcup i)=\begin{array}\{{ll}.
 		f_s(x\ress{S} \sqcup i\ress{E_s}) & \text{ if } s \in S\\[.5em]
 		f'_s(x\ress{S'} \sqcup i\ress{E'_s \setminus \dom(\omega)} \sqcup 
 		(x \circ \omega\ress{E'_s})) & \text{ if } s \in S'
	\end{array}\text{.}
\end{equation*}
In this new module, some inputs of $M'$ have been assigned to the values of some
elements of $M$. Such assignments are defined in the wiring $\omega$. For any $s \in S 
\cup S'$, the function $M \rightarrowtail_\omega M'(s)$ (denoted $f^\omega_s$) is 
defined over $(S \cup S' \cup \alpha_\omega(s)) \to \B$. In the case $s \in S'$, the 
image of $x \sqcup i$ is given by $f'_s$ which expects a configuration on 
$S' \cup E'_s$: the configuration on $S'$ is provided by $x$, and the configuration 
on $E'$ is partly provided by $i$ (on $E'_s \setminus \dom(\omega)$), and partly 
provided by $(x \circ \omega)$ (on $\dom(\omega) \cap E'_s$).
\begin{definition}
	Let $M$ be a module over $(S, E)$. A \emph{recursive wiring} of $M$ is a partial 
	map $\omega$ from $E$ to $S$.
\end{definition}
With $\omega$ defining now a recursive wiring over a module $M$, the result is similar 
if not simpler than in the definition of non-recursive wirings. The new module 
obtained from a recursive wiring $\omega$ on $M$ is denoted $\circlearrowright_\omega 
M$ and is defined over $(S, E \setminus \dom(\omega), \alpha_\omega)$ with the input
declaration defined as, for any $s \in S$, $\alpha_\omega(s) = \alpha(s) \setminus 
\dom(\omega)$. Given $s \in S$, $x$ and $i$, the local function 
$\circlearrowright_\omega M(s)$ is denoted $f^\omega_s$ and is evaluated to
$f^\omega_s(x \sqcup i) = f_s(x \sqcup i\ress{E_s \setminus \dom(\omega)} \sqcup (x 
\circ \omega\ress{E_s}))$.

Recursive and non-recursive wirings can be seen as unary and binary operators 
respectively, over the set of all modules. For any $\omega$, we can define the 
operators $\rightarrowtail_\omega$ and $\circlearrowright_\omega$. For simplicity we 
define that $M \rightarrowtail_\omega M' = \emptyset$ and $\circlearrowright_\omega M = 
\emptyset$ if the wiring $\omega$ is not defined over the same sets as $M$ or $M'$. 
Notice that both the recursive and non-recursive wirings defined by $\omega = 
\emptyset$ are well defined wiring. They define two operators, 
$\circlearrowright_\emptyset$ and $\rightarrowtail_\emptyset$, that will be useful 
later on.
\begin{property}
	The following statements hold.
	\begin{enumerate}[\em(i)\quad]
	\item $\forall M,\quad \circlearrowright_\emptyset M = M$.
	\item $\forall M, M',\quad M \rightarrowtail_\emptyset M' = M' 
		\rightarrowtail_\emptyset M$.
	\item $\forall M, M', M'',\quad M \rightarrowtail_\emptyset ( M' 
		\rightarrowtail_\emptyset M'' ) = ( M \rightarrowtail_\emptyset M' ) 
		\rightarrowtail_\emptyset M''$.
\end{enumerate}
\end{property}
For simplicity of notations, we will denote the empty non-recursive wiring as the 
union operator over modules: $M \cup M' = M \rightarrowtail_\emptyset M'$.

It is quite natural to want to put two modules together, by linking the input 
of the first to states of the second, and conversely. Our formalism allows
this operation in two steps : first, use a non-recursive wiring to connect
all of the desired inputs of the first module to states of the second
module. Then, use a recursive wiring to connect back all of the desired inputs 
of the second module to states of the first module.

We now express that recursive and non-recursive wirings are expressive enough to 
construct any BAN or module, in Theorem \ref{theorem:complete}. Our aim is to show 
that for any division of a module into smaller parts (partitioning), there is
a way to get back to the initial module using only recursive and non-recursive 
wirings.
\begin{definition}
	Let $(S, E, \alpha)$. Let $P$ be a set such that $\{S_p \mid p \in P\}$ is a partition of $S$. We define the \emph{corresponding 
	partition} of $E$ as $\{E_p = \bigcup_{s \in S_p} \alpha(s) \mid p \in P \}$.
\end{definition}
\begin{definition}
We can now develop the corresponding partition of the input declaration, and define the partition of $M$ itself.
	For every $p \in P$, we define $\alpha_p=\alpha\ress{S_p}$ over $S_p$ and $E_p$.
\end{definition}
\begin{definition}
	For every $p \in P$, let $Q_p$ verify $Q_p \cap S = \emptyset$ and $|Q_p| = |S|$, 
	and let $\tau_p: S \to Q_p$ be a bijection. For any $p \in P$, the 
	\emph{sub-module} $M_p$ over $(S_p, E_p \cup \tau_p(S \setminus S_p), \alpha_p)$ 
	is defined for $s \in S_p$ as, for all $x : S \to \B$ and for all $i: E \to \B$,
	\begin{equation*}
		M_p(s)(x\ress{S_p} \sqcup i_p) = M(s)(x \sqcup i)\text{,}
	\end{equation*} 
	where $i_p(e) = i(e)$ if $e \in E_p$ and $i_p(e) = x(\tau^{-1}_p(e))$ if $e \in 
	\tau_p(S \setminus S_p)$.
\end{definition}

In the previous definition, the purpose of each $Q_p$ is to work as a representation 
of the set $S$ for every sub-module $M_p$. Without it, every module $M_p$ would have 
used the set $(S \setminus S_p) \cup E_p$ as input set. However our definition of 
wiring requires the input sets of the wired modules to be disjoint from each other. 
The sets $Q_p$ are a workaround to bypass this technical point.

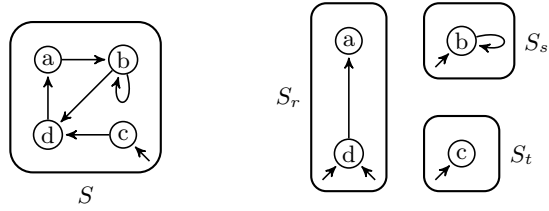
\begin{figure}[t!]
	\begin{center}
		\begin{tikzpicture}[->,>=stealth',shorten
>=1pt,auto,node distance=1cm, semithick, initial text=,inner sep=0pt, minimum
size=0pt]
			\node[state,inner sep=1pt,minimum size=10pt] (A) {a};
			\node[state,inner sep=1pt,minimum size=10pt]	(B) [right of=A] {b};
			\node[state,inner sep=1pt,minimum size=10pt]	(C) [below of=B] {c};
			\node[state,inner sep=1pt,minimum size=10pt]	(D) [below of=A] {d};

			\node[state,inner sep=1pt,minimum size=10pt,node distance = 1.5cm] at 
				(4, 0.25) (PA) {a};
			\node[state,inner sep=1pt,minimum size=10pt,node distance = 1.5cm] (PB) 
				[right of=PA] {b};
			\node[state,inner sep=1pt,minimum size=10pt,node distance = 1.5cm] (PC) 	
				[below of=PB] {c};
			\node[state,inner sep=1pt,minimum size=10pt,node distance = 1.5cm] (PD) 
				[below of=PA] {d};

			\path (A) edge (B);
			\path (B) edge [loop below] (B) edge (D);
			\path (C) edge (D);
			\path (D) edge (A);
			\draw (1.35, -1.35) -> (C);
			\draw [thick, rounded corners=.3cm] (-0.5,0.5) rectangle (1.5,-1.5);
			\node at (0.5, -1.8) {$S$};

			\path (PD) edge (PA);
			\draw (3.65, -1.6) -> (PD);
			\draw (4.35, -1.6) -> (PD);
			\draw [thick, rounded corners=.2cm] (3.5, 0.75) rectangle (4.5, -1.75);
			\node at (3.2, -0.5) {$S_r$};

			\path (PB) edge [loop right] (PB);
			\draw (5.15, -0.1) -> (PB);
			\draw [thick, rounded corners=.2cm] (5, 0.75) rectangle (6.2, -0.25);
			\node at (6.5, 0.22) {$S_s$};

			\draw (5.15, -1.6) -> (PC);
			\draw [thick, rounded corners=.2cm] (5, -0.75) rectangle (6, -1.75);
			\node at (6.3, -1.27) {$S_t$};
		\end{tikzpicture}
	\end{center}
	\caption{Interaction graphs related to Example~\ref{ex:wir}. The interaction 
		graph of the original module is on the left and the interaction graphs of the 
		partition of $M$ are on the right. Notice that we did not represent the input 
		sets $E$, $Q_r$, $Q_s$ and $Q_t$.}
	\label{figWir}
\end{figure}

\begin{example} 
\label{ex:wir}
	Let $S = \{a, b, c, d\}$, $E = \{e\}$, $P = \{r, s, t\}$ and $S_r = \{a, d\}$, 
	$S_s = \{b\}$ and $S_t = \{c\}$. For each $p \in P$, we define 
	$Q_p = \{a_p, b_p, c_p, d_p\}$. In the module $M_r$, $\alpha_r(a) = \emptyset$ 
	and $\alpha_r(d) = \{b_r, c_r\}$. In the module $M_s$, $\alpha_s(b) = \{a_s\}$. 
	In the module $M_t$, $\alpha_t(c) = \{e\}$. The modules $M_r, M_s$ and $M_t$ are 
	defined over disjoint sets and can be wired (see Figure~\ref{figWir} for an 
	illustration).
\end{example}

As a reminder, the union operator over modules is defined to be the result of
an empty non-recursive wiring.

\begin{theorem}
\label{theorem:complete}
	Let $M$ be a module and $\{M_p \mid p \in P\}$ any partition of that module,
	then there exists a recursive wiring $\omega$ such that $M =\ 
	\circlearrowright_{\omega} \left( \bigWiring_{p \in P} M_p \right)$.
\end{theorem}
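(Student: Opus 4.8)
The plan is to exhibit the recursive wiring $\omega$ explicitly and then verify that $\circlearrowright_\omega \bigl( \bigWiring_{p \in P} M_p \bigr)$ agrees with $M$ both as a set-theoretic object (same internal set, same empty input declaration up to the naming bookkeeping) and as a collection of local functions. First I would unwind the definition of the union module $N = \bigWiring_{p \in P} M_p = M_{p_1} \rightarrowtail_\emptyset (M_{p_2} \rightarrowtail_\emptyset \cdots)$; by the associativity and commutativity of $\rightarrowtail_\emptyset$ recorded in the Property, this is well defined independently of the order. Its internal set is $\bigcup_{p} S_p = S$, and its input set is $\bigcup_p \bigl( E_p \cup \tau_p(S \setminus S_p) \bigr) = E \sqcup \bigsqcup_p \tau_p(S \setminus S_p)$, with input declaration $\alpha_N$ restricting to $\alpha_p$ on each $S_p$. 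For $s \in S_p$, the local function of $N$ at $s$ is just $M_p(s)$ viewed on the larger configuration, i.e. $N(s)(x \sqcup i) = M_p(s)(x\ress{S_p} \sqcup i\ress{E_p} \sqcup i\ress{\tau_p(S\setminus S_p)})$.

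Next I would define the wiring $\omega$ that ``plugs back'' each representative $\tau_p(t)$ (for $t \in S \setminus S_p$) to the genuine automaton $t \in S$: set $\omega(\tau_p(t)) = t$ for every $p \in P$ and every $t \in S \setminus S_p$, and leave $\omega$ undefined on $E$. This is a partial map from the input set of $N$ to its internal set $S$, so $\circlearrowright_\omega N$ is a legal recursive wiring. Its internal set is $S$, its input set is $\bigl( E \sqcup \bigsqcup_p \tau_p(S\setminus S_p) \bigr) \setminus \dom(\omega) = E$, and its input declaration is $\alpha_N(s) \setminus \dom(\omega) = \alpha_p(s) = \alpha(s)$ for $s \in S_p$ — so $\circlearrowright_\omega N$ is a module over exactly $(S, E, \alpha)$, matching $M$.

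It then remains to check the local functions agree. Fix $s \in S_p$, a configuration $x : S \to \B$ and an input $i : E \to \B$. By the definition of the recursive wiring, $(\circlearrowright_\omega N)(s)(x \sqcup i) = N(s)\bigl( x \sqcup i\ress{E_s \setminus \dom(\omega)} \sqcup (x \circ \omega\ress{E_s^N}) \bigr)$, where $E_s^N = \alpha_N(s) = \alpha(s) \cup (\tau_p(S \setminus S_p))$; the first summand feeds $x$ on $S$, the term $i\ress{E_s\setminus\dom(\omega)}$ feeds $i$ on $\alpha(s) = E_s$, and $x \circ \omega$ feeds the value $x(\tau_p^{-1}(e)) = x(t)$ into each representative input $e = \tau_p(t)$. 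Plugging this into the formula for $N(s)$ above, the configuration handed to $M_p(s)$ is precisely $x\ress{S_p} \sqcup i\ress{E_p}$ on $E_p$ together with, on each $\tau_p(t)$, the value $x(t)$ — which is exactly $i_p$ from the sub-module definition with this same $x$. Hence $(\circlearrowright_\omega N)(s)(x\sqcup i) = M_p(s)(x\ress{S_p} \sqcup i_p) = M(s)(x \sqcup i)$ by the defining equation of $M_p$. As $s$, $x$, $i$ were arbitrary, $\circlearrowright_\omega N = M$.

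The routine part is the bookkeeping of domains and restrictions; the one place demanding care — and the main obstacle — is making the chain of identifications $i\ress{E_s\setminus\dom(\omega)} \sqcup (x\circ\omega\ress{E_s}) \leftrightarrow i_p$ completely precise, in particular checking that $\dom(\omega)$ meets $E_s^N = \alpha_N(s)$ in exactly $\tau_p(S \setminus S_p)$ so that $i$ survives on all of $E_s = \alpha(s)$ and nowhere else, and that the composite $x \circ \omega$ restricted to $E_s^N$ realises the map $e = \tau_p(t) \mapsto x(t)$ defining $i_p$ on the representative inputs. I would also state up front that the choice of linearisation order for $\bigWiring_{p\in P} M_p$ is irrelevant, citing items (ii) and (iii) of the Property, so that the statement ``the'' union module is unambiguous.
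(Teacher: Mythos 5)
Your proposal is correct and follows essentially the same route as the paper's own proof: you construct the identical wiring $\omega$ (sending each representative $\tau_p(t)$ back to $t$, i.e.\ $\omega = \tau_p^{-1}$ on each $Q_p$), check that $\circlearrowright_\omega\bigl(\bigWiring_{p}M_p\bigr)$ lives over $(S,E,\alpha)$, and verify the local functions agree by matching the wired input against the $i_p$ from the sub-module definition. Your version is, if anything, slightly more explicit about the domain bookkeeping than the paper's.
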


\begin{sproof}
	We construct $\omega$ to wire every link lost in partition $P$.
\end{sproof}

Theorem~\ref{theorem:complete} allows to say that our definition of wiring is 
complete: any BAN or module can be assembled with wirings.
It can be reworked more 
algebraically. Let $\mathcal{M}$ denote the set of all modules (which includes 
$\emptyset$), and for any $n \in \mathbb{N}$, let $\mathcal{M}_n$ denote the set of 
all modules of size $n$ (we have $\mathcal{M} = \bigcup_{n \in \mathbb{N}} 
\mathcal{M}_n$). For any subset $A \subseteq \mathcal{M}$ we denote 
$\overline{A}^\omega$ the closure of $A$ by the set of wiring operators $\bigcup_\omega 
\{ \rightarrowtail_\omega, \circlearrowright_\omega \}$. The following result is a 
direct corollary of Theorem~\ref{theorem:complete}.
\begin{corollary} 
	The set of all modules is equal to the closure by any wiring of the set of 
	modules of size $1$,
	\begin{equation*}
		\mathcal{M} = \overline{\mathcal{M}_1}^\omega.
	\end{equation*}
\end{corollary}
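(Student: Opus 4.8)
The plan is to prove the two inclusions $\overline{\mathcal{M}_1}^\omega \subseteq \mathcal{M}$ and $\mathcal{M} \subseteq \overline{\mathcal{M}_1}^\omega$ separately, with the nontrivial direction following at once from Theorem~\ref{theorem:complete} applied to the finest possible partition. First I would dispatch the easy inclusion. Each wiring operator $\rightarrowtail_\omega$ and $\circlearrowright_\omega$ maps modules to modules, returning $\emptyset$ (which belongs to $\mathcal{M}$) in the degenerate cases where $\omega$ is not defined over the right sets; hence the closure of any family of modules under these operators remains inside $\mathcal{M}$. Since $\mathcal{M}_1 \subseteq \mathcal{M}$, this gives $\overline{\mathcal{M}_1}^\omega \subseteq \mathcal{M}$ immediately.

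For the reverse inclusion I would fix an arbitrary module $M \in \mathcal{M}$ over some $(S, E, \alpha)$, noting that $S$ is finite since $\mathcal{M} = \bigWiring_n \mathcal{M}_n$. I take the partition of $S$ into singletons, that is $P = S$ with $S_p = \{p\}$ for each $p \in P$, and let $\{M_p \mid p \in P\}$ be the associated partition of $M$ given by the sub-module construction. The crucial point is that every $M_p$ is defined over the one-element internal set $S_p = \{p\}$, so $\lvert S_p \rvert = 1$ and therefore $M_p \in \mathcal{M}_1$. Assembling these pieces, the union $\bigWiring_{p \in P} M_p$ is by definition a finite iteration of empty non-recursive wirings $M_{p_1} \rightarrowtail_\emptyset \cdots \rightarrowtail_\emptyset M_{p_n}$, which is unambiguously defined thanks to the associativity and commutativity of $\rightarrowtail_\emptyset$ established above; as each factor lies in $\mathcal{M}_1$, this union lies in $\overline{\mathcal{M}_1}^\omega$. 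Theorem~\ref{theorem:complete} then provides a recursive wiring $\omega$ with $M = \circlearrowright_\omega\!\bigl( \bigWiring_{p \in P} M_p \bigr)$, and since $\circlearrowright_\omega$ is one of the closure operators, $M \in \overline{\mathcal{M}_1}^\omega$. As $M$ was arbitrary, $\mathcal{M} \subseteq \overline{\mathcal{M}_1}^\omega$, and combining the two inclusions yields the claimed equality.

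The only genuinely delicate point will be verifying that the singleton sub-modules really have size $1$: I must confirm that the ``size'' in the definition of $\mathcal{M}_n$ refers to the cardinality of the internal automata set $S_p$, and not to the number of inputs, which in fact grows under the $Q_p$ construction (each $M_p$ carries the input set $E_p \cup \tau_p(S \setminus S_p)$). Once that reading is fixed, the extra inputs are irrelevant and $M_p \in \mathcal{M}_1$ holds. Everything else reduces to bookkeeping, namely the trivial inclusion and the observation that finiteness of $S$ makes the union a legitimate finite element of the closure.
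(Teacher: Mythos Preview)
Your proof is correct and follows exactly the paper's approach: the trivial inclusion $\overline{\mathcal{M}_1}^\omega \subseteq \mathcal{M}$, then the reverse inclusion by taking the partition of $M$ into singletons and invoking Theorem~\ref{theorem:complete}. Your version is simply more explicit about the bookkeeping (associativity of $\rightarrowtail_\emptyset$, the meaning of ``size''), but the argument is the same.
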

Every module in $\mathcal{M}_1$ is of size $1$, but as the set of inputs $E$
of a module is not bounded, the set $\mathcal{M}_1$ is infinite. In our opinion, this corollary is enough to demonstrate that our definition of
modules and wirings is sound.

\section{Simulation} \label{s:simulation}

BANs are by nature complex systems and sometimes, we like to understand the
computational power of a subset of them by demonstrating that they are able to
simulate (or be simulated by) another subset of BANs. By \emph{simulation}, we
generally mean that a BAN is able to reproduce, according to some encoding, all the
possible computations of another BAN.

Simulation is a powerful way to understand the limitations and possibilities of
BANs. It is still difficult to prove if any two BANs simulate each other. In the
present paper our aim is to prove that the property of simulating any BAN can be
reduced in some cases to the property of locally simulating any Boolean function. 
Locally simulating a function means that a module reproduces any computation of that 
function, when the parameters of the function are encoded in the module
inputs. Our claim is that if we can locally simulate every function of a BAN, in a 
way such that the simulating modules are able to communicate with each other, then we 
can simulate the same BAN with a bigger module which is obtained by a wiring over the 
locally simulating modules. In this context, modules become a strong tool
to reduce the complexity of simulation (which is a global phenomena) to a local
scale, which is more tractable.

Let us go into further details. For $F$ a BAN over the set $S$, our aim is to
simulate $F$. For this purpose, for each $a \in S$, we create $M_a$, a module which 
is defined over some sets $(T_a, E_a, \alpha_a)$ and locally simulates the function
$f_a$. To assert this local simulation we need to define a Boolean encoding
$\phi_a$ over the configurations of $M_a$. We also need to define how these
modules communicate with each other, and in the end how they will be wired
together. For any couple $a, b \in S$ such that $a \neq b$, we
define the set $U_{a, b}$ as a subset of $T_a$. This set represents all the automata
of $M_a$ that are planned to be connected to inputs of $M_b$. We can say that
the elements of $U_{a, b}$ are the only way for the module $M_a$ to send information 
to the module $M_b$. We define which information is sent from $M_a$ to $M_b$ at any 
time with a Boolean encoding $\phi_{a, b}$ over the set of configurations on $U_{a, b}$. By definition we 
always have that if $U_{a, b} \neq \emptyset$, then $\phi_a(x\ress{T_a}) \neq \bullet \implies \phi_{a, b}(x\ress{U_{a, b}}) = \phi_a(x\ress{T_a})$. This means that if 
a module encodes an information ($\bullet$ being the absence of information, {\em i.e.} in this case $\phi_a(x\ress{T_a})$ equals $0$ or $1$), the same information is sent from that module
to each module that is meant to receive information from it.
In other words, all encodings are coherent.

Now that our modules are set to communicate with each other, we only need to wire 
them to each other. The precise nature of this wiring is defined, for every
pair $a, b \in S$ such that $a \neq b$, by the function $I_{a, b} : E_b \to U_{a, b}$ 
which we call interface between $a$ and $b$. By definition:
\begin{itemize}
\item for every $s \in U_{a, b}$, there exists $e \in E_b$ such that $I_{a, b}(e) = 
	s$ (surjectivity);
\item for every $b \in S$, $\bigsqcup_{a} I_{a, b}$ is a total map from
	$E_b$ to $\bigcup_a U_{a, b}$.
\end{itemize}
With such an interface defined for every pair $(a, b)$, the final wiring
connecting all modules together is decomposed in two steps. The first one empty-wires 
every module together, the second one applies a recursive wiring which is defined as 
the union of every interface $I_{a, b}$. The last condition that we have stated over 
the definition of an interface lets us know that the obtained module has no remaining 
inputs; it can be considered as a BAN, defined over $T = \bigcup_{a \in S} T_a$. All 
these sets are illustrated in Figure~\ref{figSim}.

\begin{figure}[t!]
	\begin{center}
		\begin{tikzpicture}[->,>=stealth',shorten
>=1pt,auto,node distance=1cm, semithick, initial text=,inner sep=0pt, minimum
size=0pt]
			\node[state,inner sep=1pt,minimum size=10pt] (A) {a};
			\node[state,inner sep=1pt,minimum size=10pt] (B) [right of=A] {b};
			\node[state,inner sep=1pt,minimum size=10pt] (C) [below of=B] {c};
			\node[state,inner sep=1pt,minimum size=10pt] (D) [below of=A] {d};

			\node[state,inner sep=1pt,minimum size=10pt] at (4, 1.5) (TAa) {e};
			\node[state,inner sep=1pt,minimum size=10pt] (TAb) [right of=TAa] {f};
			\node[state,inner sep=1pt,minimum size=10pt] (TAc) [below of=TAb] {g};
			\node[state,inner sep=1pt,minimum size=10pt] (TAd) [below of=TAa] {h};

			\node[state,inner sep=1pt,minimum size=10pt] at (8, 1.5) (TBa) {i};
			\node[state,inner sep=1pt,minimum size=10pt] (TBb) [below right of=TBa] 
				{j};
			\node[state,inner sep=1pt,minimum size=10pt] (TBc) [below left of=TBa] 
				{k};

			\node[state,inner sep=1pt,minimum size=10pt] at (7.5, -2.5) (TCa) {l}; 	
			\node[state,inner sep=1pt,minimum size=10pt] (TCb) [right of=TCa] {m};

			\node[state,inner sep=1pt,minimum size=10pt] at (4.5, -2.5) (TD) {n};

			\draw [thick, rounded corners=.3cm] (-0.5,0.5) rectangle (1.5,-1.5); 
			\path (A) edge (B) (B) edge (C) (C) edge (D) (D) edge (A) edge (B);
			\node at (0.5, -1.8) {$S$};

			\path (TAa) edge (TAb) (TAc) edge (TAb) (TAd) edge (TAa) edge (TAc);
			\draw [thick, rounded corners=.3cm] (3.4,2.1) rectangle (5.6,-0.1); 
			\node at (4.5, 2.3) {$T_a$}; 
			\draw [thick, rounded corners=.2cm] (4.5, 1.9) rectangle (5.4, 0); 
			\draw[-,thick] (5.35, 0.05) -- (5.6, -0.3); 
			\node at (5.6, -0.5) {$U_{a, b}$};

			\path (TBa) edge (TBb) (TBb) edge [bend right] (TBc) (TBc) edge 
				[bend right] (TBb) edge (TBa); 
			\draw [thick, rounded corners=.3cm] (6.8,2) rectangle (9.2,0); 
			\node at (8, 2.2) {$T_b$}; 
			\draw [thick, rounded corners=.2cm] (8.3,1.2) rectangle (9.1, 0.4); 
			\node at (8.7, 1.4) {$U_{b, c}$};

			\path (TCa) edge [bend right] (TCb) (TCb) edge [bend right] (TCa); 
			\draw [thick, rounded corners=.3cm] (7,-1.5) rectangle (9,-3.5); 
			\node at (8, -3.8) {$T_c$}; 
			\draw [thick, rounded corners=.2cm] (7.15,-2.05) rectangle (8,-2.95);
			\node at (7.6, -3.2) {$U_{c, d}$};

			\path (TD) edge [loop left] (TD); 
			\draw [thick, rounded corners=.3cm] (3.5,-1.5) rectangle (5.5,-3.5); 
			\node at (4.5, -3.8) {$T_d$}; 
			\draw [thick, rounded corners=.2cm] (3.8,-2.1) rectangle (5,-2.9); 
			\node at (4.4, -3.2) {$U_{d, a}, U_{d, b}$};

			\path (TAb) edge (TBa) (TAc) edge [bend left] (TBc); 
			\path (TBb) edge (TCb); 
			\path (TCa) edge (TD); 
			\path (TD)  edge (TAc) edge (TAd) edge (TBc);

			\draw [thick, rounded corners=.5cm] (3, -4.2) rectangle (9.5, 2.7);
			\node at (6.25, -4.5) {$T$};
		\end{tikzpicture}
	\end{center}
	\caption{Interaction graphs of the modules detailed in Example~\ref{ex:2}. The 
	interaction graph of the original BAN is on the left and the interaction graph of 
	the simulating BAN is on the right. The simulating BAN is decomposed into four 
	sub-modules, one for each node in $S$. Notice that we did not represent the input 
	sets $E_a$, $E_b$, $E_c$ and $E_d$. The connections between the sets $T_a$, 
	$T_b$, $T_c$ and $T_d$ are based upon the interfaces defined in the example.}
	\label{figSim}
\end{figure}
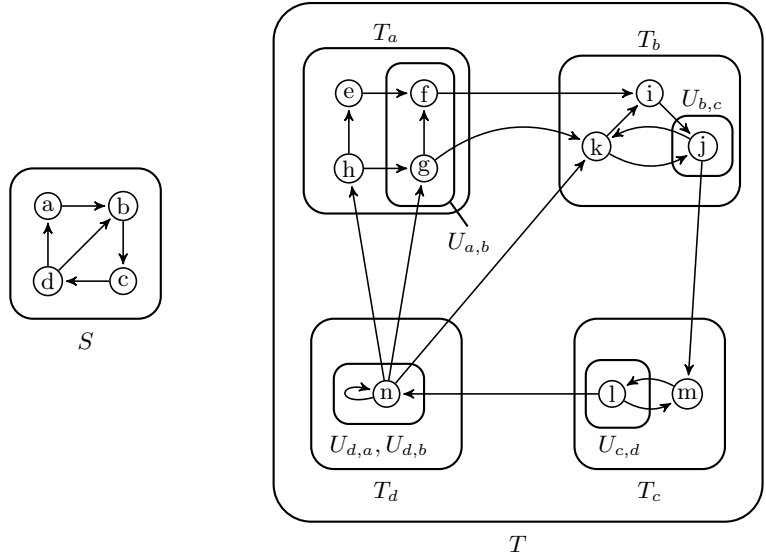

\begin{example} 
\label{ex:2}
	Let $S = \{a, b, c, d\}$. Let $T_a = \{e, f, g, h\}$, $T_b = \{i, j, k\}$, $T_c = 
	\{l, m\}$ and $T_d = \{n\}$. Let $T = T_a \cup T_b \cup T_c \cup T_d$. Let $E_a = 
	\{e_g, e_h\}$, $E_b = \{e_i, e_k, e'_k\}$, $E_c = \{ e_m \}$ and $E_d = 
	\{ e_n \}$. Let $U_{a, b} = \{f, g\}$, $U_{b, c} = \{j\}$, $U_{c, d} = \{l\}$, 
	$U_{d, a} = U_{d, b} = \{n\}$, and any other $U$ set empty. We will define 
	interfaces as the following: $I_{a, b}(e_i) = f$, $I_{a, b}(e_k) = g$, $I_{b, c}
	(e_m) = j$, $I_{c, d}(e_n) = l$, $I_{d, a}(e_h) = n$, $I_{d, a}(e_g) = n$ and 
	$I_{d, b}(e'_k) = n$ (see Figure~\ref{figSim}).
\end{example}

\begin{definition}
	Let $A$ be a set. A \emph{Boolean encoding} over A is a function $\phi: ( A \to 
	\B ) \to ( \{0, 1, \bullet\} )$, such that there exists at least one $x$ such 
	that $\phi(x) = 0$ and one $x$ such that $\phi(x) = 1$.
\end{definition}

For $x: A \to \B$ (a Boolean configuration over a set $A$), $\phi(x) = 1$ means 
that $x$ encodes a $1$, $\phi(x) = 0$ means 
that $x$ encodes a $0$, and $\phi(x) = \bullet$ means that $x$ does not encode any 
value. Each $\phi_a$ is defined as an encoding over $T_a$, and each $\phi_{a, b}$ as 
an encoding over $U_{a,b}$.

By definition we enforce that
$$\text{if } U_{a, b} \neq \emptyset \text{, then } \phi_a(x\ress{T_a}) \neq \bullet \implies \phi_{a, b}(x\ress{U_{a, b}}) = \phi_a(x\ress{T_a}).$$

Given a BAN on $S$ and some $a \in S$, let us now define the local simulation of
function $f_a$ by a module $M_a$. We want to express that given any configuration 
$x:S \to \B$, all the configurations $x':T_a \to \B$ and input configurations $i':E_a 
\to \B$ such that $x',i'$ encode the same information as $x$, the result of the 
dynamics on $x',i'$ in the simulating module must encode the result of the dynamics 
on $x$ in the simulated automaton. To express that $x'$ encodes the state of $a$ in 
$x$ is easy: $\phi_a(x')=x_a$. To express that $i'$ encodes the state of all $b \neq 
a$ in $x$ requires an additional notation. On the one hand we have $\phi_{b,a}: 
(U_{b,a} \to \B) \to (\{0, 1, \bullet\})$, and on the other hand we have $i':E_a \to 
\B$ describing the input-configuration of module $M_a$, and $I_{b,a}: E_a \to 
U_{b,a}$ describing the interface from $b$ to $a$. To plug these objects together, we 
put forward the hypothesis that if $I_{b, a}(e) = I_{b, a}(e')$, then $i'(e) = 
i'(e')$ for any $e, e' \in E_a$. This hypothesis is justified by the fact that the 
wiring applied by $I_{b, a}$ enforces the value of two inputs connected to the same 
element to be the same. Now, we define $i' \circ I_{b, a}^{-1}$ the configuration 
over $U_{b, a}$ such that $i' \circ I_{b, a}^{-1}(s) = i'(e)$ for any $e$ such that
$I_{b, a}(e) = s$. By our hypothesis this configuration is well defined.

\begin{definition}
	Let $a \in S$, $f_a$ be a Boolean function over $S$ and $M_a$ a module over 
	$(T_a, E_a, \alpha_a)$, with $\phi_a$ (resp. $\phi_{b,a}$) a Boolean encoding
	over $T_a$ (resp. $U_{b,a}$). Given a finite update mode $\Delta$ over
	$T_a$, $M_a$ \emph{locally simulates} $f_a$, denoted by $M_a \sim_\Delta f_a$, if 
	for all $x: S \to \B$,
	\begin{enumerate}
	\item and for all $x': T_a \to \B$ such that $\phi_a(x') = x_a$,\
	\item and for all $i': E_a \to \B$ such that for all $b \neq a$ we have 
		$\phi_{b, a} ( i' \circ I_{b, a}^{-1} ) = x_b$,
	\item we have:
		\begin{equation*}
			\phi_a( {M_a}_\Delta(x' \sqcup i') ) = f_a(x)\text{.}
		\end{equation*}
	\end{enumerate}
\end{definition}

This local simulation can be defined on a wide range of update modes $\Delta$. To 
ensure that the simulation works as planned at the global scale, we restrict the 
range of update modes $\Delta$ used for the local simulations, to those where no 
automata with input(s) are updated later than the first update.

\begin{definition}
	An update mode $\Delta$ over a module $M$ is defined to be \emph{input-first} if 
	for all $k > 1$ and all $s \in \Delta_k$, we have $\alpha(s) = \emptyset$.
\end{definition}

\begin{definition}
	We define that $M$ is able to \emph{input-first simulate} $f$ if there exists an 
	input-first $\Delta$ such that $M \sim_\Delta f$.
\end{definition}

Intuitively, such update modes let us make parallel the computation of modules; all 
information between modules is communicated simultaneously at the first frame of 
computation (update), followed by isolated updates in each module. To define global 
simulation, we introduce the global encoding $\Phi: (S \to \B) \to (S' \to \B) \cup 
\{ \bullet \}$ which always verifies that for all $x' : S' \to \B$, there exists
$x:S \to \B$ such that $\Phi(x) = x'$.

\begin{definition}
	Let $F$ and $F'$ be two Boolean automata networks over $S$ and $S'$ respectively.
	We define that $F$ \emph{simulates} $F'$, denoted by $F \sim F'$, if there exists 
	a global encoding $\Phi$ such that for all $x'$, $x$ such that $\Phi(x) = x'$, 
	and for all $\delta' \subseteq S'$, there exists a finite update mode $\Delta$ 
	over $S$ such that $\Phi( F_\Delta (x) ) = F'_{\delta'}(x')$.
\end{definition}

Given the definitions of local and global simulation, for any BAN $F$ over a set
$S$, we define each module $M_a$ as earlier, each defined over $(T_a, E_a,
\alpha_a)$, along side each set $U_{a, b}, I_{a, b}$ and each encoding
$\phi_a, \phi_{a, b}$.
\begin{theorem}
\label{th-simulation}
	Let $F$ be a BAN over $S$. For each $a \in S$, let $M_a$ be a module over $(T_a, 
	E_a, \alpha_a)$ that locally simulates $F(a)$ in an input-first way. There exists 
	a recursive wiring $\omega$ over $T=\bigcup_{a \in S}T_a$ such that
	\begin{equation*}
		\circlearrowright_\omega \left( \bigWiring_{a \in S} M_a \right) \sim F\text{.}
	\end{equation*}
\end{theorem}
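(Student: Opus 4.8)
The plan is to exhibit the recursive wiring $\omega$ explicitly and then verify the simulation relation directly from the definitions. First I would build $\omega$ as the union $\bigsqcup_{a,b \in S,\, a \neq b} I_{a,b}$ of all the interfaces; the second condition on interfaces (that $\bigsqcup_a I_{a,b}$ is a total map from $E_b$ onto $\bigcup_a U_{a,b}$) guarantees that $\dom(\omega) = \bigcup_{b} E_b$, so that after the empty-wiring $\bigWiring_{a \in S} M_a$ (which simply places the modules side by side over $(T, \bigcup_b E_b, \alpha)$) the recursive wiring $\circlearrowright_\omega$ consumes every remaining input, and the result $G := \circlearrowright_\omega (\bigWiring_{a \in S} M_a)$ is a genuine BAN over $T = \bigcup_{a \in S} T_a$. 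I would record this as a preliminary lemma, and also note the shape of the local functions of $G$: for $s \in T_a$, $g_s(x) = f^{M_a}_s\bigl(x\ress{T_a} \sqcup (x \circ \omega\ress{E_a})\bigr)$, which is exactly feeding module $M_a$ its own states together with the states of the upstream automata selected by the interfaces.

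Next I would define the global encoding $\Phi : (S \to \B) \to (T \to \B) \cup \{\bullet\}$ that witnesses $G \sim F$. The natural choice is: $\Phi(x) = x'$ where $x'\ress{T_a}$ is \emph{some} configuration with $\phi_a(x'\ress{T_a}) = x_a$ and, simultaneously, $\phi_{a,b}(x'\ress{U_{a,b}}) = x_a$ for every $b$ with $U_{a,b} \neq \emptyset$ — such a configuration exists and can be chosen coherently because of the enforced coherence condition ($\phi_a(\cdot) \neq \bullet \implies \phi_{a,b}(\cdot) = \phi_a(\cdot)$) together with the fact that each $\phi_a$ takes both values $0$ and $1$. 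I would then check the surjectivity requirement on $\Phi$ (for all $x'$ there is $x$ with $\Phi(x) = x'$) holds if we restrict the codomain of $\Phi$ appropriately, or argue it directly from the definition of the encodings. The key point this choice buys us: if $\Phi(x) = x'$, then for every $a$, the restriction $x'\ress{T_a}$ encodes $x_a$ (hypothesis 1 of local simulation), and the input configuration that $\omega$ feeds into $M_a$, namely $i' = x' \circ \omega\ress{E_a}$, satisfies $i' \circ I_{b,a}^{-1} = x'\ress{U_{b,a}}$ (by unwinding the definitions of $\omega$ and of $i' \circ I_{b,a}^{-1}$), hence $\phi_{b,a}(i' \circ I_{b,a}^{-1}) = x_b$ for every $b \neq a$ with $U_{b,a} \neq \emptyset$ (hypothesis 2).

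Now fix $x'$ with $\Phi(x) = x'$ and an arbitrary $\delta' \subseteq S$; I must produce a finite update mode $\Delta$ over $T$ with $\Phi(G_\Delta(x')) = F_{\delta'}(x)$. For each $a \in \delta'$, let $\Delta^a$ be the input-first update mode over $T_a$ witnessing $M_a \sim_{\Delta^a} f_a$; take $\Delta = \bigcup_{a \in \delta'} \Delta^a$ (union of update modes, as defined earlier), and for $a \notin \delta'$ contribute nothing. The input-first property is what makes this work: in $\Delta$, the first update $\Delta_1 = \bigcup_{a \in \delta'} \Delta^a_1$ is the only one touching automata that have inputs, so when $G_\Delta$ runs, at step $1$ each automaton with an input in $T_a$ reads exactly $x'\ress{T_a} \sqcup (x' \circ \omega\ress{E_a})$ — the same data $M_a$ would read on its own from $(x'\ress{T_a}, i')$ — and from step $2$ onward all updated automata are input-free, so their $G$-local functions coincide with their $M_a$-local functions evaluated on the $T$-configuration restricted to $T_a$ (the values outside $T_a$ are irrelevant because input-free functions of $M_a$ only depend on $T_a$). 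Consequently $G_\Delta(x')\ress{T_a} = {M_a}_{\Delta^a}(x'\ress{T_a} \sqcup i')$ for every $a \in \delta'$, and $G_\Delta(x')\ress{T_a} = x'\ress{T_a}$ for $a \notin \delta'$. Applying $\phi_a$ and local simulation, $\phi_a(G_\Delta(x')\ress{T_a}) = f_a(x)$ for $a \in \delta'$ and $= x_a$ for $a \notin \delta'$; either way this is $(F_{\delta'}(x))_a$. To finish, I need $\Phi(G_\Delta(x')) = F_{\delta'}(x)$, i.e. that $G_\Delta(x')$ is a valid $\Phi$-image of $F_{\delta'}(x)$ — this requires not just the $\phi_a$ values but also the coherence of the $\phi_{a,b}$ values on the $U_{a,b}$, which again follows from the enforced coherence condition applied to ${M_a}_{\Delta^a}(x'\ress{T_a} \sqcup i')$.

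\textbf{Main obstacle.} I expect the delicate step to be the independence-from-the-rest argument at updates $k \geq 2$: one must argue that after the first (input-carrying) step, each module $M_a$ evolves in isolation inside $G$, i.e. that an input-free local function $f^{M_a}_s$ for $s \in T_a$ genuinely ignores the coordinates in $T \setminus T_a$ when viewed as a function over $T$, and that the interleaving of the different $\Delta^a$'s in the union $\Delta$ does not let information leak between modules. Making this precise — and handling the bookkeeping of the global encoding $\Phi$ being a relation rather than a function, so that "$\Phi(G_\Delta(x')) = F_{\delta'}(x)$" has to be read as "$G_\Delta(x')$ is among the configurations $\Phi$ sends $F_{\delta'}(x)$ to" — is where the care is needed; the rest is a direct unwinding of the definitions of wiring, local simulation, and input-first update modes.
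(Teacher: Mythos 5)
Your construction of the wiring ($\omega = \bigsqcup_{a \neq b} I_{a,b}$), your update mode $\Delta = \bigcup_{a \in \delta'} \Delta^a$, and your key lemma $G_\Delta(x')\ress{T_a} = {M_a}_{\Delta^a}(x'\ress{T_a} \sqcup i')$ — proved by separating the input-carrying automata of $T_a$ (touched only at step $1$, by the input-first hypothesis) from the input-free ones (whose local functions depend only on $T_a$ by the very definition of a module, so no information leaks in at later steps) — are exactly the paper's argument; the ``main obstacle'' you identify is handled there by precisely this $A$/$B$ split and is not delicate.

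The one genuine defect is the direction of your global encoding. The paper's definition of ``$G$ simulates $F$'' requires $\Phi : (T \to \B) \to (S \to \B) \cup \{\bullet\}$, i.e.\ a \emph{decoding} of the simulator's configurations; you instead build an \emph{encoding} $(S \to \B) \to (T \to \B) \cup \{\bullet\}$. This forces an arbitrary choice of representative, turns $\Phi$ into a relation, makes ``$\Phi(G_\Delta(x')) = F_{\delta'}(x)$'' ill-typed, and creates the extra obligation you flag at the end (checking that the \emph{output} configuration is a coherent encoding on every $U_{a,b}$). The paper avoids all of this by setting $\Phi(x)(a) = \phi_a(x\ress{T_a})$ (and $\Phi(x) = \bullet$ if some block decodes to $\bullet$): then $\Phi$ is single-valued, the surjectivity clause of the definition follows from each $\phi_a$ attaining both $0$ and $1$, the conclusion is a componentwise computation using only the $\phi_a$'s, and the coherence condition $\phi_{b,a}(x\ress{U_{b,a}}) = \phi_b(x\ress{T_b})$ is invoked only once, on the \emph{initial} configuration, to verify hypothesis~2 of local simulation. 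Flip your $\Phi$ accordingly and the rest of your argument goes through exactly as the paper's does.
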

\begin{sproof}
We prove that the execution of the module $M$ obtained from the wiring $\omega$
can be built from the execution of each $M_a$. We apply the hypothesis of local
simulation on each $M_a$, and obtain a global simulation.
\end{sproof}

This theorem helps us investigate if every BAN can be simulated by a BAN with
a given property, hence justifying that theoretical studies can impose some restrictions without loss of generality.
If every function $f$ can be locally simulated by a given
module with a property $\mathcal{P}$, and if property $\mathcal{P}$ is preserved over 
wirings, then we know that any BAN can be simulated by another BAN with the property 
$\mathcal{P}$. This is formally proven for the following cases.

\begin{corollary}
Let $F$ be a BAN. There exists $F'$ such that $F' \sim F$ and every function of
$F'$ is a disjunctive clause.
\end{corollary}

\begin{corollary}
Let $F$ be a BAN. There exists $F'$ such that $F' \sim F$ and every function of
$F'$ is monotone.
\end{corollary}

\begin{sproof}
Both of theses results are obtained by replacing the automata of $F$ by modules
that locally simulates them. For disjunctivity, the module has one automaton for
each clause of the conjunctive normal form of the simulated function, and one 
for the result
(using De Morgan's law we convert the outer conjunction to a disjunction). For 
monotony, we use a lemma that shows that we can always construct a monotone
function from any function at the cost of duplicating each variable. Using this
lemma we construct a network with twice the automata which locally simulates any
function. The results are obtained by the Theorem \ref{th-simulation}.
\end{sproof}

It can seem strange that this particular theorem applies to BANs and not
to modules (as it would be a more general result). Such a result would
need a definition of simulation between modules, and such a definition
would imply an interpretation of the information provided by the
simulating module's inputs. We choose not to develop this particular idea, 
as this theorem was only meant to
apply to BANs, but a generalisation of this result to modules
would be
a good subject for future works.

\section{Examples}
\label{s:examples}

To illustrate and justify the notions that are presented in Section~\ref{s:motivations},
we shall now present two examples of BANs that can be partially understood by cutting them into modules. The
first example is a toy BAN illustrated in Figure~\ref{fig:artificial}.
In this representation we assume the function of each automaton to be a
disjunctive clause with one literal for each incident edge, the sign of which
dictates the sign of the literal.

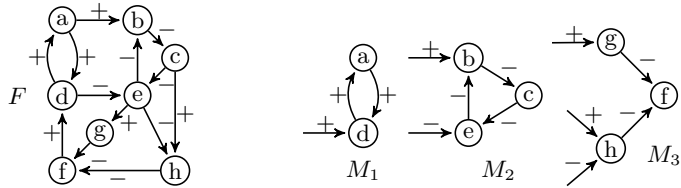
\begin{figure}[t!]
\begin{center}
\begin{tikzpicture}[->,>=stealth',shorten
>=1pt,auto,node distance=1cm, semithick, initial text=,inner sep=0pt, minimum
size=0pt]
	\node[state,inner sep=1pt,minimum size=10pt] (A) at (0, 0) {a};
	\node[state,inner sep=1pt,minimum size=10pt] (B) [right of=A] {b};
	\node[state,inner sep=1pt,minimum size=10pt] (C) at (1.5, -0.5) {c};
	\node[state,inner sep=1pt,minimum size=10pt] (D) [below of=A] {d};
	\node[state,inner sep=1pt,minimum size=10pt] (E) [below of=B] {e};
	\node[state,inner sep=1pt,minimum size=10pt] (F) [below of=D] {f};
	\node[state,inner sep=1pt,minimum size=10pt] (G) at (0.5, -1.5) {g};
	\node[state,inner sep=1pt,minimum size=10pt] (H) at (1.5, -2) {h};
	\path (A) edge [bend left] node {+} (D)
	      (D) edge [bend left] node {+} (A)
	      (A) edge node {$+$} (B)
	      (D) edge node {$-$} (E)
	      (B) edge node {$-$} (C)
	      (C) edge node {$-$} (E)
	      (E) edge node {$-$} (B)
	      (E) edge node {$+$} (G)
	      (E) edge node {$-$} (H)
	      (C) edge node {$+$} (H)
	      (G) edge node {$\ -$} (F)
	      (H) edge node {$-$} (F)
	      (F) edge node {$+$} (D);

	\node[state,inner sep=1pt,minimum size=10pt] (A') at (4, -0.5) {a};
	\node[state,inner sep=1pt,minimum size=10pt] (B') at (5.4, -0.5) {b};
	\node[state,inner sep=1pt,minimum size=10pt] (C') at (6.2, -1) {c};
	\node[state,inner sep=1pt,minimum size=10pt] (D') [below of=A'] {d};
	\node[state,inner sep=1pt,minimum size=10pt] (E') [below of=B'] {e};
	\node[state,inner sep=1pt,minimum size=10pt] (F') at (8, -1) {f};
	\node[state,inner sep=1pt,minimum size=10pt] (G') [above left of=F'] {g};
	\node[state,inner sep=1pt,minimum size=10pt] (H') [below left of=F'] {h};
	\path (A') edge [bend left] node {+} (D')
	      (D') edge [bend left] node {+} (A')

	      (B') edge node {$-$} (C')
	      (C') edge node {$-$} (E')
	      (E') edge node {$-$} (B')

	      (G') edge node {$-$} (F')
	      (H') edge node {$-$} (F');
	
	\draw (3.2, -1.5) -- node {$+$} (D'); 

	\draw (4.6, -0.5) -- node {$+$} (B'); 
	\draw (4.6, -1.5) -- node {$-$} (E'); 

	\draw (6.51, -0.3) -- node {$+$} (G'); 
	\draw (6.71, -2.2) -- node {$-$} (H');
	\draw (6.71, -1.2) -- node {$+$} (H');

	\node at(-0.6, -1) {$F$};
	\node at(4, -2) {$M_1$};
	\node at(5.8, -2) {$M_2$};
	\node at(8, -1.8) {$M_3$};
\end{tikzpicture}
\end{center}
\caption{Representation of a handmade Boolean automata network $F$ next to
the three different modules $M_1$, $M_2$ and $M_3$ that compose it. The
function of each automaton is defined as a disjunctive clause with a positive
literal for each incident ``$+$'' edge, and a negative literal for each
incident ``$-$'' edge. For example, $f_h(x) = x_c \vee \neg x_e$.}
\label{fig:artificial}
\end{figure}

Looking at this example, it does not seem easy to express the entire behaviour of the
BAN $F$. Its representation is a strongly connected graph with
multiple interconnected positive and negative cycles. Yet, cutting this graph
into multiple modules and analysing the functionality of each of them
is an easy way to understand interesting parts of the dynamics of the network.

By assuming the decomposition of $F$ as shown in Figure~\ref{fig:artificial},
we can start to attach a functionality to each module. Module $M_1$ is a positive
cycle, where the configuration $x_a = x_d = 1$ is a fixed point (whatever the input). Its
functionality
can be identified as a ``one time button'' that cannot be pushed back. Module
$M_2$ is a negative cycle, which are known for their long limit cycles. The
difference here is that as $M_2$ has two inputs, its behaviour can be stabilised
into a fixed point by a fixed input. For example, the fixed point $x_b=x_e=1, x_c=0$
can be obtained with the constant input $i_b=1, i_e=0$. Finally, the module $M_3$
is acyclic and thus only computes the Boolean function
$\neg i_g \vee ( \neg i_h \wedge i_{h'} )$. It follows that $M_3$ stabilises
to a fixed point under any constant input.

This simple analysis leads us to the following conclusion : every fair execution
(meaning executing every automaton an infinite amount of time) of $F$ which
verifies $x_a = x_d = 1$ at any moment stabilises into a fixed point. This is true
because $x_a = x_d = 1$ implies that the ``one time button'' of $M_1$ is pushed in,
which locks the behaviour of $M_2$ into a fixed point, which leads $M_3$ to
compute a Boolean function over a fixed input. This somewhat informal
demonstration has led us to a conclusion that was not easily implied by
the architecture of the network, showcasing the usefulness of understanding
networks as composition of parts to which one can assign functionalities.

The second example is drawn from a model predicting the cell cycle sequence
of fission yeast~\cite{S-Davidich2008}. This network
is represented in Figure~\ref{fig:yeast-ban}, and can be
decomposed into a more abstract network, where each node represents a module
of the original network. This network is represented in Figure~\ref{fig:yeast-abstract}
and its modules are constructed as follows:
$C = \{Rum1, Ste9\},
D = \{Cdc, Cdc*\}, F = \{Cdc25\}, G = \{Mik\}, I = \{Start, SK\},
J = \{PP, Slp1\}$. A quick analysis of these modules leads us to sort them
into three categories : cycles ($C, D$), functions ($F, G$) and
igniters ($I, J$). Let us now explain this organisation in an informal way.

The two cycle modules $C$ and $D$ are organised in a 4-cycle of negative
feedback which means that if considered separately from the rest of the network,
those two modules would behave as antagonists: in most cases, when the
automata of $C$ (resp. $D$) are evaluated to $1$, the automata of $D$
(resp. $C$) will be evaluated to $0$.
Modules $F$ and $G$ can be viewed as functions which help $D$ and $C$ respectively to
be evaluated to $1$; they both are influenced by $J$ in different ways. Modules $I$ and
$J$ are called igniters because they turn themselves to $0$ every time they
are evaluated to $1$, but not before influencing the other nodes. Module $I$ inhibits
$C$ when activated, and can be considered as the input of the whole network.
Module $J$ is activated by $D$, activates $C$ and $G$, and inhibits $F$.

From this we can conclude that if the network stabilises, it will more likely
stabilise by evaluating $C$ to $1$ and $D$ to $0$. This conclusion arises from
the fact that $D$ activates $J$, which in turn inhibits $D$ directly, but also
inhibits $F$ (which activates $D$) and activates $G$ (which inhibits $D$).
This also means that $F$ will be evaluated to $0$ and $G$ to $1$. Finally,
$I$ and $J$ will naturally be evaluated to $0$ because of the natural negative
feedback that compose them. This particular evaluation of the network (only $C$
and $G$ to $1$) is actually the main fixed point of the network's dynamics
put forward in~\cite{S-Davidich2008} and
is named $G1$. This shows that such a fixed point can be described without the
need to compute the $2^{10} = 1024$ different configurations of the network
and their dynamics.

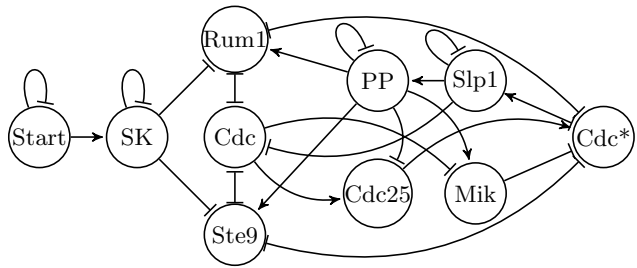
\begin{figure}[t!]
\begin{center}
\begin{tikzpicture}[->,>=stealth',shorten
>=1pt,auto,node distance=1.3cm, semithick, initial text=,inner sep=0pt, minimum
size=0pt]
	\node[state] (Start) at (0, 0) {Start};
	\node[state] (SK) [right of=Start] {SK};
	\node[state] (Cdc) [right of=SK] {Cdc};
	\node[state] (Rum) [above of=Cdc] {Rum1};
	\node[state] (Ste) [below of=Cdc] {Ste9};
	\node[state] (PP) at (4.5, 0.75) {PP};
	\node[state] (Slp) [right of=PP] {Slp1};
	\node[state] (Cdc25) at (4.5, -0.75) {Cdc25};
	\node[state] (Mik) [right of=Cdc25] {Mik};
	\node[state] (Cdc*) at (7.5, 0) {Cdc*};

	\path (Rum) edge [|-|, bend left] (Cdc*)
	      (Ste) edge [|-|, bend right] (Cdc*)
	      (Rum) edge [|-|] (Cdc)
	      (Ste) edge [|-|] (Cdc);

	\path (Start) edge (SK)
              (SK) edge[-|] (Rum)
              (SK) edge[-|] (Ste);

	\draw[-|] (Start) to [out=110,in=80,looseness=8] (Start);
	\draw[-|] (SK) to [out=110,in=80,looseness=8] (SK);

	\path (Slp) edge [<-|] (Cdc*)
		    edge (PP)
		    edge [-|, bend left] (Cdc);
	\draw[-|] (Slp) to [out=150,in=120,looseness=8] (Slp);

	\path (PP) edge (Rum)
		   edge (Ste)
		   edge [bend left] (Mik)
		   edge[-|, bend left] (Cdc25);
	\draw[-|] (PP) to [out=140,in=110,looseness=8] (PP);

	\path (Mik) edge [-|] (Cdc*);

	\path (Cdc25) edge[bend left] (Cdc*);

	\path (Cdc) edge [bend right] (Cdc25)
		    edge [-|, bend left] (Mik);

\end{tikzpicture}
\end{center}
\caption{Representation of the network simulating the cell cycle sequence of
fission yeast extracted from~\cite{S-Davidich2008}. Activating interactions
are represented by simple arrows and inhibiting interactions by flat arrows.
The detail of each node's function is available in the original paper.}
\label{fig:yeast-ban}
\end{figure}

\begin{figure}[t!]
\begin{center}
\begin{tikzpicture}[->,>=stealth',shorten
>=1pt,auto,node distance=1cm, semithick, initial text=,inner sep=0pt, minimum
size=0pt]
	\node[state,inner sep=1pt,minimum size=10pt] (C) at (0, 0) {C};
	\node[state,inner sep=1pt,minimum size=10pt] (D) [right of=C] {D};
	\node[state,inner sep=1pt,minimum size=10pt] (G) [right of=D] {G};
	\node[state,inner sep=1pt,minimum size=10pt] (I) [below of=C] {I};
	\node[state,inner sep=1pt,minimum size=10pt] (J) [right of=I] {J};
	\node[state,inner sep=1pt,minimum size=10pt] (F) [right of=J] {F};

	\path (I) edge [-|] (C);
	\draw[-|] (I) to [out=215,in=155,looseness=8] (I);

	\path (C) edge [|-|] (D);

	\path (D) edge [|-|] (G);
	\path (D) edge [|-|] (G);

	\path (J) edge (C);
	\path (J) edge [<-|] (D);
	\path (J) edge (G);
	\path (J) edge [-|] (F);
	\draw[-|] (J) to [out=210,in=170,looseness=8] (J);

	\path (F) edge [<->] (D);

\end{tikzpicture}
\end{center}
\caption{Abstract representation of the interactions between the modules
$C, D, F, G, I$ and $J$ based upon the network represented in
Figure~\ref{fig:yeast-ban}.}
\label{fig:yeast-abstract}
\end{figure}
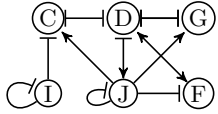

\section{Conclusion}

The two theorems formulated in this article tell us that seeing BANs as modular 
entities is a way to discover useful results. With the simple addition of inputs to 
BANs, we have expressed a general simulation structure that can be used to understand 
the computational nature and limits of given properties over BANs. Let us underline 
that all the definitions and results can be applied to BANs and modules defined over 
countably infinite sets of automata and inputs. 

Wherever Turing-completeness is observed, complex behaviours emerge that cannot
be simply or quickly formulated from the basic rules of the computation.
In such situations, the solution is either to compute every single possibility
to capture the whole dynamics of the observed system, or to simplify the
model. We believe that the framework developed in this paper is a strong
candidate to enable us to decompose complex networks into parts with tractable
functionalities, and to make conclusions about the whole network at a
cheaper cost. This approach is still very informal at this moment and will be 
the focus of further developments.

\paragraph{Acknowledgements} This work has been supported ``Investissement d'avenir" program ANR-16-CONV-00001 and PACA Project Fri 2015\_01134.

\bibliographystyle{plain}
{\small{\bibliography{stage}}}

\appendix

\section{Proofs}

\setcounter{property}{0}
\setcounter{theorem}{0}
\setcounter{equation}{0}
\setcounter{corollary}{0}

\begin{property}
	The following statements hold.
	\begin{enumerate}[\em(i)\quad]
	\item $\forall M,\quad \circlearrowright_\emptyset M = M$.
	\item $\forall M, M',\quad M \rightarrowtail_\emptyset M' = M' 	
		\rightarrowtail_\emptyset M$.
	\item $\forall M, M', M'',\quad M \rightarrowtail_\emptyset 
		( M' \rightarrowtail_\emptyset M'' ) = ( M \rightarrowtail_\emptyset M' ) 
		\rightarrowtail_\emptyset M''$.
	\end{enumerate}
\end{property}

\begin{proof}
	\begin{equation*}
		\forall M, M', M \rightarrowtail_\emptyset M' = M' 
			\rightarrowtail_\emptyset M\text{.}
	\end{equation*}
	By definition, $M \rightarrowtail_\emptyset M'$ and $M' \rightarrowtail_\emptyset 
	M$ are both defined on $(S \cup S', E \cup E', \alpha \sqcup \alpha')$. For any 
	$s \in S$, $M \rightarrowtail_\emptyset M'(s) = M' \rightarrowtail_\emptyset 
	M(s)$ and for $s' \in S'$, $M \rightarrowtail_\emptyset M'(s') = M' 
	\rightarrowtail_\emptyset M(s')$.
	\begin{equation*}
		\forall M, \circlearrowright_\emptyset M = M\text{.}
	\end{equation*}
	By a similar argument, $\circlearrowright_\emptyset M$ is by definition defined 
	on $(S, E, \alpha)$ such that $\circlearrowright_\emptyset M(s) = M(s)$ for any 
	$s \in S$.
	\begin{equation*}
		\forall M, M', M", M \rightarrowtail_\emptyset ( M' \rightarrowtail_\emptyset 
		M" ) = ( M \rightarrowtail_\emptyset M' ) \rightarrowtail_\emptyset M"\text{.}
	\end{equation*}
	By definition, the left side of this equation is defined over 
	$(S \cup S' \cup S", E \cup E' \cup E", \alpha \sqcup \alpha' \sqcup \alpha")$ as 
	is the right side of this equation. The two modules defining the same functions, 
	we obtain the result.\qed
\end{proof}

\begin{theorem}
	\label{th1}
	Let $M$ be a module and $\{M_p \mid p \in P\}$ a partition of that module. There 
	exists a recursive wiring $\omega$ such that
	\begin{equation}
		M =\ \circlearrowright_{\omega} \left( \bigWiring_{p \in P} M_p \right)\text{.}
	\end{equation}
\end{theorem}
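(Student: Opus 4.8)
The plan is to explicitly construct the recursive wiring $\omega$ and then verify that the two modules coincide on their defining data: same ground set, same input declaration, and same local functions. First I would describe $\omega$. The union $\bigWiring_{p \in P} M_p$ is, by the definition of $\cup$ as an empty non-recursive wiring, a module over $\bigl(\bigcup_p S_p,\ \bigcup_p (E_p \cup \tau_p(S \setminus S_p)),\ \bigsqcup_p \alpha_p\bigr)$; since $\{S_p\}$ partitions $S$, its ground set is exactly $S$, and its input set is $E$ together with all the ``foreign'' copies $\bigcup_p \tau_p(S \setminus S_p)$. These foreign copies are precisely the internal edges of $M$ that were severed when $S$ was split into the $S_p$: for $s' \in S_p$, the input $\tau_p(s) \in E_p \cup \tau_p(S\setminus S_p)$ (with $s \notin S_p$) stands for the value $x_s$ that $M(s')$ originally read. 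So I define $\omega$ on $\bigcup_p \tau_p(S \setminus S_p)$ by $\omega(\tau_p(s)) = s$ for every $p \in P$ and every $s \in S \setminus S_p$, and undefined elsewhere (in particular undefined on all of $E$). This is a partial map from the input set of $\bigWiring_{p} M_p$ to $S$, hence a legitimate recursive wiring, and $\dom(\omega) = \bigcup_p \tau_p(S \setminus S_p)$.

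Next I would check the bookkeeping. The module $\circlearrowright_\omega \bigl(\bigWiring_p M_p\bigr)$ is by definition over $\bigl(S,\ (E \cup \bigcup_p \tau_p(S\setminus S_p)) \setminus \dom(\omega),\ \alpha_\omega\bigr)$; since $\dom(\omega)$ is exactly the set of foreign copies, the remaining input set is $E$, matching $M$. For the input declaration: $\alpha_\omega(s) = \bigl(\bigsqcup_p \alpha_p\bigr)(s) \setminus \dom(\omega)$; for $s \in S_p$ this is $\alpha_p(s) = \alpha\ress{S_p}(s) = \alpha(s)$, and $\alpha(s) \subseteq E_p \subseteq E$ is disjoint from $\dom(\omega)$, so $\alpha_\omega(s) = \alpha(s)$, again matching $M$.

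Then comes the substantive step: showing $\circlearrowright_\omega\bigl(\bigWiring_p M_p\bigr)(s) = M(s)$ as functions on $(S \cup \alpha(s)) \to \B$, for each $s \in S$. Fix $s$, say $s \in S_p$, and fix a configuration $x : S \to \B$ and an input $i : E \to \B$. Unfolding the definition of the recursive wiring, the evaluated value is $f_s\bigl(x \sqcup i\ress{E_s^{\cup}\setminus\dom(\omega)} \sqcup (x \circ \omega\ress{E_s^{\cup}})\bigr)$, where $f_s$ here denotes the local function of $\bigWiring_p M_p$ at $s$, which is just $M_p(s)$, and $E_s^{\cup}$ is its input set $\alpha_p(s) = \alpha(s)$. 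Since $\dom(\omega) \cap \alpha(s) = \emptyset$, the wiring substitutes nothing here directly — but this is where the two-level structure bites: $M_p(s)$ itself, by its definition, satisfies $M_p(s)(y\ress{S_p} \sqcup j_p) = M(s)(y \sqcup j)$ where $j_p$ restores the foreign-copy inputs $\tau_p(t)$ to the value $y(t)$. I would reconcile this by tracking how the single recursive wiring $\omega$ over the whole union simultaneously reconnects, for each $p$, the copy $\tau_p(t)$ back to the genuine automaton $t$: the composite $x \circ \omega$ sends $\tau_p(t) \mapsto x(t)$, which is exactly the value $j_p$ would have assigned. Unwinding both definitions in parallel and using that $\{S_p\}$ covers $S$, one gets $M(s)(x \sqcup i)$ on the nose. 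The main obstacle is precisely this last reconciliation: the sub-module definition is phrased via an auxiliary $i_p$ and the bijections $\tau_p$, while the wiring is phrased via $\omega$ acting on the union's input set, so care is needed to confirm that $\omega$ restricted to the inputs attached to $S_p$ is exactly $\tau_p^{-1}$ on $\tau_p(S\setminus S_p)$ and that no input of $E$ or of any $S_q$ with $q \neq p$ interferes — which is guaranteed by the disjointness built into the $Q_p$ construction and the partition property of $\alpha$.
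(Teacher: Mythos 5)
Your construction is the same as the paper's: you define $\omega$ on $\bigcup_{p}\tau_p(S\setminus S_p)$ by $\omega(\tau_p(s))=s$ (i.e.\ $\omega=\tau_p^{-1}$ on each $Q_p$), check that the resulting module lives over $(S,E,\alpha)$, and reduce the equality of local functions to the sub-module's defining equation $M_p(s)(x\ress{S_p}\sqcup i_p)=M(s)(x\sqcup i)$, exactly as in the appendix proof. One small internal wobble: you claim the input set of $s\in S_p$ in the union is $\alpha(s)$ and hence disjoint from $\dom(\omega)$, so that ``the wiring substitutes nothing,'' yet your own reconciliation needs $x\circ\omega$ to supply the foreign copies $\tau_p(t)$ to $M_p(s)$ --- those copies must in fact be assigned to $s$ by the union's input declaration (the paper's definition of $\alpha_p$ is equally underspecified here), but this does not change the substance or validity of the argument.
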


\begin{proof}
	By definition of the empty wiring, the module $\bigWiring_{p \in P} M_p$ is 
	defined over $(S, E \cup \bigcup_{p \in P} \tau_p(S \setminus S_p), \bigsqcup_{p 
	\in P} \alpha_p)$ and for all $s \in S$, $x : S \rightarrow \B$ and $i : E 
	\rightarrow \B$ verifies
	\begin{equation}
	\label{th1:eq1}
		\left(\bigWiring_{p \in P} M_p\right) (s) (x \sqcup i') = 
			M(s)(x \sqcup i)\text{.}
	\end{equation}
	Knowing that $i'(e) = i(e)$ for $e \in E_s$, and $i'(s) = x( \tau^{-1}_p(s) )$ 
	for $s \in Q_p$. Let $\omega$ be the recursive wiring over $\bigWiring_{p \in P} 
	M_p$ with domain $\bigcup_{p \in P} \tau_p(S \setminus S_p)$ such that $\omega(q) 
	= \tau_p^{-1}(q)$ given p such that $q \in Q_p$.\smallskip

	\noindent By definition of the recursive wiring, the module 
	$\circlearrowright_\omega (\bigWiring_{p \in P} M_p)$ is defined over the set $(S, 
	E, \alpha)$. For all $s, x, i$, we now have that
	\begin{equation}
	\label{th1:eq2}
		\circlearrowright_\omega \left(\bigWiring_{p \in P} M_p \right) (s) 
		(x \sqcup i) =  \left(\bigWiring_{p \in P} M_p\right)(s)(x \sqcup i\ress{E_s} 
		\sqcup (x \circ \omega\ress{\tau_p(S \setminus S_p)}))\text{.}
	\end{equation}
	By our definitions of $\omega$ and $i'$, we have that $i' = i\ress{E_s} \sqcup (x 
	\circ \omega\ress{\tau_p(S \setminus S_p)})$. From that, and 
	Equations~\ref{th1:eq1} and \ref{th1:eq2}, we infer that for all $s, x, i$:
	\begin{equation*}
		\circlearrowright_\omega \left(\bigWiring_{p \in P} M_p \right) (s) (x \sqcup 
		i) = M(s)(x \sqcup i)\text{.}
	\end{equation*}
	Therefore for any $s$:
	\begin{equation*}
		\circlearrowright_\omega \left(\bigWiring_{p \in P} M_p \right) (s) = 
			M(s)\text{,}
	\end{equation*}
	which concludes the proof. \qed
\end{proof}

\begin{corollary}
	The set of all modules is equal to the closure by any wiring of the set of 
	modules of size $1$ :
	\begin{equation*}
		\mathcal{M} = \overline{\mathcal{M}}_1^\omega\text{.}
	\end{equation*}
\end{corollary}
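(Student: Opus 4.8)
The plan is to derive the equality $\mathcal{M} = \overline{\mathcal{M}_1}^\omega$ from Theorem~\ref{theorem:complete} by a straightforward double inclusion. The inclusion $\overline{\mathcal{M}_1}^\omega \subseteq \mathcal{M}$ is immediate: the closure $\overline{A}^\omega$ of any set $A$ of modules under the operators $\bigcup_\omega\{\rightarrowtail_\omega, \circlearrowright_\omega\}$ is by construction a set of modules (recall we set $M\rightarrowtail_\omega M' = \emptyset$ and $\circlearrowright_\omega M = \emptyset$ when the wiring is ill-typed, and $\emptyset \in \mathcal{M}$), and $\mathcal{M}_1 \subseteq \mathcal{M}$, so the closure stays inside $\mathcal{M}$. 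The work is all in the reverse inclusion $\mathcal{M} \subseteq \overline{\mathcal{M}_1}^\omega$.

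For the reverse inclusion, I would take an arbitrary module $M \in \mathcal{M}$ over some $(S, E, \alpha)$ and exhibit it as an element of $\overline{\mathcal{M}_1}^\omega$. First I would dispose of the trivial case $M = \emptyset$ (or $|S| = 0$), which sits in the closure since every wiring operator can produce $\emptyset$. Otherwise, choose the finest partition of $S$, namely $P = S$ with $S_p = \{p\}$ for each $p \in S$; then each sub-module $M_p$ given by the construction preceding Theorem~\ref{theorem:complete} is a module over $(S_p, E_p \cup \tau_p(S\setminus S_p), \alpha_p)$ with $|S_p| = 1$, hence $M_p \in \mathcal{M}_1$. By Theorem~\ref{theorem:complete} applied to this partition, there is a recursive wiring $\omega$ with
\begin{equation*}
M = \circlearrowright_\omega\left(\bigWiring_{p\in P} M_p\right).
\end{equation*}
Now $\bigWiring_{p\in P} M_p$ is an iterated empty non-recursive wiring $\rightarrowtail_\emptyset$ of the modules $M_p \in \mathcal{M}_1$, so it lies in $\overline{\mathcal{M}_1}^\omega$ (here associativity and commutativity of $\rightarrowtail_\emptyset$ from the Property let us read the big union unambiguously as a finite composition of binary operators). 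Applying one more operator $\circlearrowright_\omega$ keeps us in the closure, so $M \in \overline{\mathcal{M}_1}^\omega$. Since $M$ was arbitrary, $\mathcal{M} \subseteq \overline{\mathcal{M}_1}^\omega$, and together with the first inclusion this gives the claimed equality.

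The only genuinely delicate point — and the step I would be most careful about — is checking that the construction really yields size-$1$ modules and that it applies to an \emph{arbitrary} $M$, including the empty module and, as the paper notes at the end, modules over countably infinite $S$: the partition $P = S$ into singletons is available in all these cases, and each $M_p$ has exactly one internal automaton regardless of how large $E_p \cup \tau_p(S\setminus S_p)$ is (the set $\mathcal{M}_1$ being infinite precisely because input sets are unbounded). Everything else is bookkeeping: once Theorem~\ref{theorem:complete} is in hand and the big union is recognised as a legal finite sequence of $\rightarrowtail_\emptyset$ operations, the corollary follows with no further computation.
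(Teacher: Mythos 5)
Your proof is correct and follows essentially the same route as the paper's: the trivial inclusion $\overline{\mathcal{M}_1}^\omega \subseteq \mathcal{M}$, then the singleton partition of an arbitrary $M$ combined with Theorem~\ref{theorem:complete} for the reverse inclusion. The extra care you take with the empty module, the infinite case, and reading the big union as iterated $\rightarrowtail_\emptyset$ is sound but only makes explicit what the paper leaves implicit.
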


\begin{proof}
	Trivially, $\overline{\mathcal{M}}_1^\omega \subseteq \mathcal{M}$. For any $M \in 
	\mathcal{M}$ of size $n$, we know by Theorem~\ref{th1} that in particular the 
	$n$-partition of $M$ into sub-modules of size $1$ can be wired into the original 
	module $M$. Therefore $\mathcal{M} = \overline{\mathcal{M}}_1^\omega$.\qed
\end{proof}

\begin{theorem}
\label{th2}
	Let $F$ be a BAN over $S$. Let $\{M_a \mid a \in S\}$ be a set such that for 
	every $a$, $M_a$ is a module over $(T_a, E_a, \alpha_a)$ that simulates $F(a)$ in 
	a input-first way. There exists $\omega$ a recursive wiring over $T$ such that :
	\begin{equation*}
		\circlearrowright_\omega \left( \bigWiring_{a \in S} M_a\right) \sim F\text{.}
	\end{equation*}
\end{theorem}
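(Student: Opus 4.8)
The plan is to define the recursive wiring $\omega$ as the union of all interfaces $I_{a,b}$ and then verify that the resulting BAN $\circlearrowright_\omega(\bigWiring_{a\in S} M_a)$ simulates $F$ with the global encoding $\Phi = \bigsqcup_{a\in S}\phi_a$ (more precisely, $\Phi(x)$ is any configuration on $T$ whose $\phi_a$-encoding on each block $T_a$ equals $x_a$; such a configuration exists because each $\phi_a$ takes both values $0$ and $1$). Write $M = \circlearrowright_\omega(\bigWiring_{a\in S} M_a)$; by the definition of the empty non-recursive wiring and of the recursive wiring, $M$ is a BAN over $T=\bigcup_{a\in S}T_a$, since the last interface condition ($\bigsqcup_a I_{a,b}$ total from $E_b$) guarantees no inputs survive. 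The first block of the argument is purely bookkeeping: unfold the definitions of $\bigWiring$ and $\circlearrowright_\omega$ to get an explicit formula for $f^\omega_s$ for $s\in T_a$, showing that the inputs of $M_a$ that used to be free are now fed exactly the values $x\circ\omega$ restricted to $E_a$, i.e. the values on the $U_{b,a}$ cells routed through $I_{b,a}$.

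Next I would handle the update mode. Given $\delta'\subseteq S'=S$ (the update we must reproduce) I would build $\Delta$ over $T$ as follows: take the input-first update modes $\Delta^a$ witnessing each local simulation $M_a\sim_{\Delta^a} F(a)$, and set $\Delta = \bigcup_{a\in\delta'}\Delta^a$ (union of update modes, as defined in the excerpt), where for $a\notin\delta'$ we contribute the empty update mode so those blocks are frozen. Because each $\Delta^a$ is input-first, in the combined $\Delta$ every automaton carrying an input is updated only at step $1$; hence at step $1$ all inter-module communication happens simultaneously and consistently, and from step $2$ on the blocks evolve in isolation. The key lemma to isolate here is that the execution $M_\Delta(x')$ decomposes blockwise: $M_\Delta(x')\ress{T_a} = {M_a}_{\Delta^a}(x'\ress{T_a}\sqcup i'_a)$ for $a\in\delta'$ and $= x'\ress{T_a}$ for $a\notin\delta'$, where $i'_a\in E_a\to\B$ is the input configuration read off at step $1$ from the neighbouring cells via $\omega$. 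This decomposition is exactly where input-firstness is used: after step $1$ no automaton of $M_a$ reads anything outside $T_a$, so the restriction of the global execution to $T_a$ coincides with the local execution of $M_a$ on a fixed input.

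It remains to verify the encoding matches. Fix $x:S\to\B$, $x'=\Phi(x)$, and $a\in\delta'$. By construction $\phi_a(x'\ress{T_a})=x_a$, so hypothesis~(1) of local simulation holds. For hypothesis~(2), I would check that the input configuration $i'_a$ obtained at step $1$ satisfies $\phi_{b,a}(i'_a\circ I_{b,a}^{-1})=x_b$ for every $b\neq a$: indeed $i'_a\circ I_{b,a}^{-1}$ is the restriction of $x'$ to $U_{b,a}\subseteq T_b$ (this is precisely what the wiring $\omega=\bigsqcup I_{b,a}$ does, and the well-definedness of $i'_a\circ I_{b,a}^{-1}$ is the hypothesis on $I_{b,a}$), and by the coherence condition ``$\phi_b(x'\ress{T_b})\neq\bullet\implies\phi_{b,a}(x'\ress{U_{b,a}})=\phi_b(x'\ress{T_b})=x_b$''; since $\phi_b(x'\ress{T_b})=x_b\in\{0,1\}\neq\bullet$, we get the required equality. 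Then the definition of $M_a\sim_{\Delta^a}F(a)$ gives $\phi_a({M_a}_{\Delta^a}(x'\ress{T_a}\sqcup i'_a))=f_a(x)$, i.e. $\phi_a(M_\Delta(x')\ress{T_a})=f_a(x)=F_{\delta'}(x)(a)$; for $a\notin\delta'$ the block is frozen so $\phi_a(M_\Delta(x')\ress{T_a})=\phi_a(x'\ress{T_a})=x_a=F_{\delta'}(x)(a)$. Hence $\Phi(M_\Delta(x'))=F_{\delta'}(x)$, which is the definition of $M\sim F$.

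The main obstacle I expect is the blockwise decomposition lemma for $M_\Delta$ and getting the indexing of the union of update modes exactly right: one has to be careful that "step $1$" of $\Delta$ genuinely supplies each $M_a$ with the correct fixed input $i'_a$ for the rest of its local run, and that freezing the blocks with $a\notin\delta'$ does not interfere with the inputs read by active blocks. Everything else is unwinding the definitions of the two wiring operators and invoking the coherence and interface hypotheses; those are routine once the decomposition is in place.
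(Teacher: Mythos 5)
Your plan follows the paper's proof essentially step for step: the same wiring $\omega=\bigcup_{a\neq b}I_{a,b}$, the same blockwise decomposition lemma $M_{\Delta\cup\Delta'}(x)\ress{T_a}={M_a}_{\Delta}(x\ress{T_a}\sqcup(x\circ\bigsqcup_b I_{b,a}))$ proved via input-firstness, the same assembled update mode $\bigcup_{a\in\delta'}\Delta^a$, and the same use of the coherence condition $\phi_b\neq\bullet\implies\phi_{b,a}=\phi_b$ to discharge hypothesis~(2) of local simulation. The only quibble is notational: the paper's global encoding $\Phi$ maps $T$-configurations to $S$-configurations (with surjectivity as the side condition), whereas you describe it in the opposite direction, but this does not affect the substance of the argument.
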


\begin{proof}
	By definition of the empty wiring, $\bigWiring_{a \in S} M_a$ is defined over 
	$(T, \bigcup_{a \in S} E_a,$ $\bigsqcup_{a \in S} \alpha_a)$. Let $\omega = 
	\bigcup_{a, b \in S, a \neq b} I_{a, b}$. By definition of $I_{a, b}$, we can 
	easily see that the module $M =\ \circlearrowright_\omega \left( \bigWiring_{a \in 
	S} M_a\right)$ is defined over $(T, \emptyset, s \mapsto \emptyset)$ and can be 
	seen as a Boolean automata network. Let us prove that, for all $a \in S$, for all 
	input-first simulating update mode $\Delta$ for the module $M_a$, for any 
	$\Delta'$ update mode over $T \setminus T_a$, and for any $x : T \rightarrow \B$, 
	the following equation holds:
	\begin{equation}
	\label{th2:eq1}
		M_{\Delta \cup \Delta'} (x)\ress{T_a} = {M_a}_\Delta ( x\ress{T_a} \sqcup (x 
			\circ \bigsqcup_b I_{b, a}) )\text{.}
	\end{equation}
	At the first step of the execution, the wiring $\omega$ implies that for any $s 
	\in T_a$, for any $x$, $M(s)(x)$ $=$ $\left( \bigWiring_{a \in S} M_a \right)(s)
	(x \sqcup (x \circ \omega))$. 
	From the definition of the empty wiring, we can deduce in particular that $M(s)
	(x)$ $=$ $M_a(s)($ $x\ress{T_a}\sqcup (x \circ \omega\ress{E_a}))$. By definition 
	of the interfaces, this notation is equivalent to $\forall s \in T_a, M(s)(x) = 
	M_a(s)(x\ress{T_a} \sqcup (x \circ \bigsqcup_{b} I_{b, a}))$.\smallskip

	Let us define $A = \{ s \in T_a \mid \alpha(s) \neq \emptyset \}$ and $B = T_a 
	\setminus A$. By the definition of $\Delta$, we know that $s \in \Delta_k$ with 
	$k > 0$ implies $s \in B$.\\[2mm] 
	Let us look at the $A$ part of this problem. Let $\delta = \Delta_0$ and $\delta' 
	= \Delta'_0$. We can trivially deduce from the previous statement that:
	\begin{equation*}
		M_{\delta \cup \delta'}(x)\ress{A} = {M_a}_{\delta}(x\ress{T_a} \sqcup (x 
		\circ \bigsqcup_{b} I_{b, a}))\ress{A}\text{.}
	\end{equation*}
	Furthermore, there is no $s \in A$ such that $s \in \Delta_k$ for any $k > 0$. We 
	can simply conclude since no update is made to any function of $A$ in the rest of 
	the execution that $M_{\delta \cup \delta'}(x)\ress{A} = M_{\Delta \cup \Delta'}
	(x)\ress{A}$, and that ${M_a}_{\delta}(x\ress{T_a} \sqcup (x \circ \bigsqcup_{b} 
	I_{b, a}))\ress{A} = {M_a}_{\Delta}(x\ress{T_a} \sqcup (x \circ \bigsqcup_{b} 
	I_{b, a}))\ress{A}$. In conclusion of this $A$ part, $M_{\Delta \cup \Delta'}
	(x)\ress{A} = {M_a}_{\Delta}(x\ress{T_a}$ $\sqcup (x \circ \bigsqcup_{b} I_{b, 
	a}))\ress{A}$.\\[2mm]
	Let us now consider the $B$ part of the problem. For $s \in B$, we have $M(s)(x) 
	= M_a(s)(x\ress{T_a} \sqcup ( x \circ \omega\ress{E_s}))$. By definition of $B$, 
	$s \in B$ implies $E_s = \emptyset$. We can conclude that $\forall s \in B, M(s)
	(x) = M_a(s)(x\ress{T_a})$. We deduce, for any $\delta \subseteq T_a$ and 
	$\delta' \subseteq T \setminus T_a$, that $M_{\delta \cup \delta'} (x)\ress{B} = 
	{M_a}_\delta ( x\ress{T_a} \sqcup i )\ress{B}$, for $i$ any input configuration
	over $E_a$. By a simple recursive demonstration, we can easily show that 
	$M_{\Delta \cup \Delta'} (x)\ress{B} = {M_a}_\Delta ( x\ress{T_a} \sqcup i 
	)\ress{B}$.\\[2mm]
	Reuniting the $A$ and $B$ parts of this demonstration, we obtain that 
	$M_{\Delta \cup \Delta'} (x)$ $= {M_a}_{\Delta}(x\ress{T_a} \sqcup (x \circ 
	\bigsqcup_{b} I_{b, a}))\ress{A} \cup {M_a}_\Delta ( x\ress{T_a} \sqcup i 
	)\ress{B}$. Assuming $i = x \circ \bigsqcup_{b} I_{b, a}$, we obtain $M_{\Delta 
	\cup \Delta'} (x) = {M_a}_{\Delta}(x\ress{T_a} \sqcup (x \circ \bigsqcup_{b} 
	I_{b, a}))$, and prove the lemma described in Equation~\ref{th2:eq1}.\smallskip

	Let us now define $\Phi : ( T \rightarrow \B ) \rightarrow ( S \rightarrow \B ) 
	\cup \{ \emptyset \}$ such that, for any $x : T \rightarrow \B$, $\Phi(x) = 
	\emptyset$ if there exists $a \in S$ such that $\phi_a(x\ress{T_a}) = \emptyset$, 
	and $\Phi(x)(a) = \phi_a(x\ress{T_a})$ otherwise. Let $x$ and $x'$ such that 
	$\Phi(x) = x'$, and $x' \neq \emptyset$. Let $\delta \subseteq S$ be an update 
	over $F$. Let us define, for any $a \in \delta$, the update mode $\Delta_a$ such 
	that $\Delta_a$ is an input-first update mode upon which $M_a$ simulates the 
	function $F(a)$ ; by hypothesis such an update mode can always be found.\\[2mm]
	Let us define the update mode $\Delta$ over $T$ such that $\Delta = \bigcup \{ 	
	\Delta_a \mid a \in \delta \}$. We will now prove that $\Phi( M_\Delta(x) ) = 
	F_\delta(x')$. First, we can clearly see that $M_\Delta(x) = \bigsqcup \{ 
	M_\Delta(x)\ress{T_a} \mid a \in S \}$, which can be developed into 
	$M_\Delta(x) = \bigsqcup \{ M_\Delta(x)\ress{T_a} \mid a \in \delta \} \sqcup 
	\bigsqcup \{ x\ress{T_a} \mid a \in S \setminus \delta \}$, from which we infer:
	\begin{equation*}
		M_\Delta(x) = \bigsqcup \{ M_{\Delta_a \cup \bigcup_{b \in \delta, b \neq a} 
			\Delta_b}(x)\ress{T_a} \mid a \in \delta \} \sqcup \bigsqcup \{ 
			x\ress{T_a} \mid a \in S \setminus \delta \}\text{.}
	\end{equation*}
	Using the lemma formulated in Equation~\ref{th2:eq1}, this can be rewritten into:
	\begin{equation*}
		M_\Delta(x) = \bigsqcup_{a \in \delta} {M_a}_{\Delta_a}(x\ress{T_a} \sqcup (x 
		\circ \bigsqcup_{b} I_{b, a})) \sqcup \bigsqcup_{a \in S \setminus \delta} 
		x\ress{T_a}\text{.}
	\end{equation*}
	As the result of an execution of the module $M_a$ is always defined as a 
	configuration over $T_a$, we can infer the following encoding of $M_\Delta(x)$ by 
	$\Phi$ :
	\begin{equation*}
		\Phi( M_\Delta(x) )(a)=\begin{array}\{{ll}.
			\phi_a({M_a}_{\Delta_a}(x\ress{T_a} \sqcup (x \circ \bigsqcup_{b} 
				I_{b, a}))) & \text{ if } a \in \delta\\
			\phi_a(x\ress{T_a}) & \text{ if } a \in S \setminus \delta
		\end{array}\text{.}
	\end{equation*}
	We know by definition of $x$ and $x'$ that $\phi_a(x\ress{T_a}) = x'_a$ and that
	$\phi_{b, a} (x \circ I_{b, a} \circ I^{-1}_{b, a}) = \phi_{b, a} (x 
	\ress{U_{b,a}}) = \phi_b (x \ress{T_b}) = x'_b$ by definition of $\phi_{b, a}$. 
	From this we can apply the local simulation definition and obtain:
	\begin{multline*}
		\Phi( M_\Delta(x) )(a)=\begin{array}\{{ll}.
			f_a(\Phi(x)) & \text{ if } a \in \delta\\
			\phi_a(x\ress{T_a}) & \text{ if } a \in S \setminus \delta
		\end{array}\\
		\iff \Phi( M_\Delta(x) )(a)=\begin{array}\{{ll}.
			f_a(\Phi(x)) & \text{ if } a \in \delta\\
			\Phi(x)(a) & \text{ if } a \in S \setminus \delta.
		\end{array}\text{.}
	\end{multline*}
	Futhermore, by the definition of an update over $F$, we can write that:
	\begin{equation*}
		F_\delta(x')(a) =\begin{array}\{{ll}.
			f_a(x') & \text{ if } a \in \delta\\
			x'(a) & \text{ if } a \in S \setminus \delta
		\end{array}\text{.}
	\end{equation*}
	Finally, by definition of $x' = \Phi(x)$:
	\begin{equation*}
		F_\delta(x')(a) =\begin{array}\{{ll}.
			f_a(\Phi(x)) & \text{ if } a \in \delta\\
			\Phi(x)(a) & \text{ if } a \in S \setminus \delta
		\end{array}\text{,}
	\end{equation*}
	which implies $\Phi(M_\Delta(x)) = F_\delta(x')$, and concludes the proof.\qed
\end{proof}

\begin{corollary}
Let $F$ be a BAN. There exists $F'$ such that $F' \sim F$ and every function of
$F'$ is a disjunctive clause.
\end{corollary}

\begin{proof}
With Theorem \ref{th-simulation} in mind, we only need to demonstrate that for
any function $f$, there exists a module locally simulating it in a input-first
way, in which every function is a disjunctive clause.

Let us consider $F$ a BAN set over $S$. Let $a \in S$. We decompose $f_a$ into a 
set of disjunctive clauses $C_a$ such that $f_a(x) = \bigwedge\limits_{c \in
C_a} c(x)$.

Let $M_a = (T_a, E_a, \alpha_a)$ be a module with $T_a = \{u_c \mid c \in C\} 
\cup \{r_a\}$, $E_a = \{e_{b, c, a} \mid a \neq b \text{, and the variable } x_b
\text{ is included in clause } c\}$.
For all $b, c$, $e_{b, c, a} \in \alpha(u_c)$
if and only if $x_b$ is included in clause $c$. For $c \neq c'$, $e_{b, c, a} 
\notin \alpha(u_c')$ and $\alpha(r_a) = \emptyset$.

For $c \in C_a$, $x$ a configuration over $T_a$ and $e$ a configuration over
$E_a$, $M_a(u_c)$ is the function described by $f_{u_c}(x, e) = c(
x_a \mapsto \neg x(r_a) \sqcup x_b \mapsto \neg e( e_{b, c, a} ) )$. The function
$M(r)$ is the function $f_{r_a}(x, e) = \bigvee\limits_{c \in C_a} \neg x(u_c)$.

This local module is shaped as a pyramid where the base is constitued of
one node for every disjunctive clause of the simulated function, and the top
of exactly one node that represents the result of the function. It follows
from this definition that every function of this module is a disjunctive clause.
An illustrated example of such a local module is presented in
Figure~\ref{fig:local-disjunctive-module}.

\begin{figure}[t!] 
	\begin{center}
		\begin{tikzpicture}[->,>=stealth',shorten
>=1pt,auto,node distance=1.6cm, semithick, initial text=,inner sep=0pt, minimum
size=0pt] 
			\node[state] (R) {$r_a$};
			\node[state] (C) [above left of=R] {$u_c$}; 
			\node[state] (C') [below left of=R] {$u_{c'}$};

			\path (C) edge[bend left] node {$-$} (R); 
			\path (C') edge node {$-$} (R); 

			\path (R) edge[bend left] node {$-$} (C);

			\draw (-2, -0.73) -- node {$+$} (C');
			\draw (-2, -1.53) -- node {$-$} (C');

			\node at (-2.5, -0.73) {$e_{b, c', a}$};
			\node at (-2.5, -1.53) {$e_{d, c', a}$};

			\draw (R) -- (1, 0);
			\node at (1.6, 0) {$\neg f_a(x)$};

		\end{tikzpicture}
	\end{center} 
	\caption{Interaction graph of the locally disjunctive module for the
example function $f_a(x) = x_a \wedge ( \neg x_b \vee x_d )$. We name the
clauses of $f_a$ as $c = x_a$ and $c' = \neg x_b \vee x_d$. Notice that most
of the signs are inversed to simulate a AND gate.}
	\label{fig:local-disjunctive-module}
\end{figure}
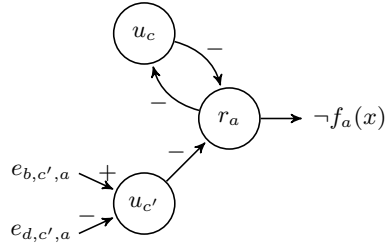

We define $U_{b, a}$ such that $U_{b, a} = \{r_a\}$ if the variable $x_b$ is 
included in one of the clauses of the function $f_a$, and $U_{b, a} = \emptyset$
otherwise.

The encodings $\phi_a$ and $\phi_{b,a}$ for every $b$ such that $U_{b, a} \neq 
\emptyset$ are defined such that $\phi_a(x) = \phi_{b, a}(x\ress{U_{b,a}}) = 
\neg x(r_a)$. This means that the node $r$ represents the inverse of the result
of the function.

We always define $I_{b, a}(e_{b, c, a}) = r_b$. More intuitively, to resolve
the value of the variable $x_b$ in a clause of $f_a$, look for the value of
the node $r_b$ in the local module $M_b$. We reverse it back to the correct
value thanks to the inversion of each input of each clause automaton.

\begin{lemma}
$M_a$ locally simulates $f_a$ in a input-first way.
\end{lemma}

Let $\Delta_a = (\{u_c \mid c \in C_a\}, \{r_a\})$ be an input-first update mode 
for the module $M_a$. We will sometimes note $\Delta_a = (\delta, \delta_r)$
in further developments.

Let $x$ be a configuration over $F$. Let $x'$ be a configuration over $T_a$
such that $\phi_a(x') = x_a$. Let $i'$ be an input configuration over $E_a$
such that for any $b \neq a$, $\phi_{b, a}(i' \circ I^{-1}_{b, a}) = x_b$.

Such a $x'$ is a configuration over $T_a$ with $x'(r_a) = \neg x_a$.
Such a $i'$ is a configuration over $E_a$ such that $i'(e_{b, c, a}) = \neg x_b$
for
every $b$ and $c$. Such configurations are well defined and can always be found.

To prove the above lemma, we have to show that
$\phi_a(M_{a \Delta_a}(x' \sqcup i')) = f_a(x)$, which can be simplified into
$\neg M_{a \Delta_a}(x' \sqcup i')(r_a) = f_a(x)$. By the definition of an execution
over a module, this can be developed into :

\begin{multline*}
M_{a \Delta_a}(x' \sqcup i')(r_a) = f_{r_a}(M_{a \delta}(x' \sqcup i'), i') \\
= \bigvee\limits_{c \in C_a} \neg M_{a \delta}(x' \sqcup i')(u_c)
= \bigvee\limits_{c \in C_a} \neg c(x_a \mapsto \neg x'(r_a) \sqcup x_b 
\mapsto \neg i'( e_{b, c, a} ) ) \\
= \neg \bigwedge\limits_{c \in C_a} c(x_a \mapsto \neg x'(r_a) \sqcup x_b \mapsto 
\neg i'( e_{b, c, a} ) ).
\end{multline*}

By the above hypothesis, this can be simplified into :

\[M_{a \Delta_a}(x' \sqcup i')(r_a) = \neg \bigwedge\limits_{c \in C_a} c(x_a 
\mapsto x_a \sqcup x_b \mapsto x_b ),\]

which let us simply conclude that :

\[\neg M_{a \Delta_a}(x' \sqcup i')(r_a) = \bigwedge\limits_{c \in C_a} c(x)
= f_a(x)\]

wich proves the lemma. From this result and the fact that the property that
function are locally defined by disjunctive functions isn't broken by
any wiring, we conclude the result.\qed

\end{proof}

\begin{corollary}
Let $F$ be a BAN. There exists $F'$ such that $F' \sim F$ and every function of
$F'$ is monotone.
\end{corollary}

\begin{proof}

To prepare this proof we must first obtain the following result.

\begin{lemma}
\label{monotony-lemma}
Let $x : S \to \B$. Let $f$ be a Boolean function over $S$.
Let $S' = \{s, s^- \mid s \in S\}$. There exists $f'$
a monotone Boolean function over $S'$ such that
$f(x) = f'( x \sqcup s^- \mapsto \neg x(s))$.
\end{lemma}

For reminder, we assume that $x \leq x'$ if and only if $x(s) \leq x'(s)$ for every
$s \in S$, and that $f'$ is monotone if and only if $x \leq x' \implies
f'(x) \leq f'(x')$.

For $x'$ an execution over $S'$, and $s \in S$, we note
$code(x', s) \Leftrightarrow x'(s) = \neg x'(s^-)$. 
Let $f$ be a Boolean function over $S$.

We define $f'$ over the set $S'$
as the following :
\begin{equation*}
f'(x') =\begin{array}\{{ll}.
	f(x'\ress{S}) & \text{ if for every } s \in S, code(x',s)\\
	1 & \text{ if for every } s \in S, \neg code(x', s) \implies x'(s) = x'(s^-) = 1\\
	0 & \text{ otherwise}
	\end{array}.
\end{equation*}

From this definition we clearly see that for all configurations $x$ over $S$,
$f(x) = f'( x \sqcup s^- \mapsto \neg x(s))$. Let us now show that $f'$ is
monotone.

Let $x'$ and $x"$ be two configurations over $S'$, such that $x' < x"$. This
implies that for all $s' \in S'$, $x'(s') \leq x"(s')$ and that there is at
least one $s' \in S'$ such that $x'(s') < x"(s')$. This clearly implies that
the propositions $\forall s \in S, code(x',s)$ and $\forall s \in S, code(x",s)$
cannot both be true.

Let us suppose $\forall s \in S, code(x',s)$ and
$\exists s \in S, \neg code(x",s)$. As $x' < x"$, for every $s \in S$ such that
$\neg code(x", s)$, we now that $x"(s) = x"(s^-) = 1$. This implies that
$f'(x") = 1$, and that $f'(x') \leq f'(x")$.

Let us now suppose that $\exists s \in S, \neg code(x',s)$ and
$\forall s \in S, code(x",s)$. By a similar argument, we now suppose that for 
every $s \in S$ such that $\neg code(x', s)$, we have that
$x'(s) = x'(s^-) = 0$. This implies that $f'(x') = 0$, and $f'(x') \leq f'(x")$.

Let us finally suppose that $\exists s \in S, \neg code(x',s)$ and
$\exists s \in S, \neg code(x",s)$. In this case, we know that $f'(x') = 1
\implies f'(x") = 1$ since $x' < x"$. Assuming $f'(x') = 0$ naturally implies
$f'(x') \leq f'(x")$. This concludes the proof of Lemma \ref{monotony-lemma}.

Let $F$ be a BAN defined over set $S$. For every $a \in S$, we define
$M_a = (T_a, E_a, \alpha_a)$ a module with $T_a = \{u_{a,-}, u_{a,+}\}$,
$E_a = \{e_{b, a, +}, e_{b, a, -} \mid x_b \text{ is included in } f_a\}$. The
function $\alpha$ is such that
$e_{b, a, +} \in E_a \implies e_{b, a, +} \in \alpha(u_{a,+})$ and
$e_{b, a, -} \in E_a \implies e_{b, a, -} \in \alpha(u_{a,-})$.

Let $S$ be a configuration over $S$. We define the monotone function $f'_a$
over the set $\{s, s^- \mid s \in S\}$ that for every configuration $x$ verifies
$f_a(x) = f'_a( x \sqcup s^- \mapsto \neg x(s))$. The existence of such a function
is given by Lemma \ref{monotony-lemma}.

For $x'$ a configuration over $T_a$, and $i$ a configuration over $E_a$,
We define $M_a(u_{a,+})$ as a function that verifies :

\begin{multline*}
M_a(u_{a,+})(x' \sqcup i) =\\
f'_a (
a \mapsto x'(u_{a,+}) \sqcup
a^- \mapsto x'(u_{a,-}) \sqcup
\bigsqcup\limits_{b \neq a} 
\left(b \mapsto i'(e_{b,a,+}) \sqcup b^- \mapsto i'(e_{b,a,-}) \right))
\end{multline*}

The function $M_a(u_{a,-})$ is given by $M_a(u_{a,-})(x' \sqcup i)
= \neg M_a(u_{a,+})(x' \sqcup i)$.

This local module is composed of two automata, one that computes the original
function and one that computes the negation of the original function. This
allows us to simulate the original network while being locally monotone. The
monotony is given by the fact that the configurations used for simulation
are now incomparable to each other. A representation of an example is presented
in Figure~\ref{fig:local-monotone-module}.

\begin{figure}[t!] 
	\begin{center}
		\begin{tikzpicture}[->,>=stealth',shorten
>=1pt,auto,node distance=2cm, semithick, initial text=,inner sep=0pt, minimum
size=0pt] 
		\node[state] (A+) {$u_{a, +}$};
		\node[state] (A-) [right of=A+] {$u_{a, -}$};

		\path (A+) edge[bend left] (A-)
		      (A+) edge[loop above] (A+);
		\path (A-) edge[bend left] (A+)
		      (A-) edge[loop above] (A-);

		\draw (A+) -- (0, -1); 
		\draw (A-) -- (2, -1); 
		\node at (0, -1.2) {$f_a(x)$};
		\node at (2, -1.2) {$\neg f_a(x)$};

		\draw (-1, 0.4) -- (A+);
		\draw (-1, -0.4) -- (A+);
		\draw (3, 0.4) -- (A-);
		\draw (3, -0.4) -- (A-);
		\node at (-1.4, 0.4) {$e_{b, a, +}$};
		\node at (-1.4, -0.4) {$e_{c, a, +}$};
		\node at (3.4, 0.4) {$e_{b, a, -}$};
		\node at (3.4, -0.4) {$e_{c, a, -}$};

		\end{tikzpicture}
	\end{center} 
	\caption{Interaction graph of the locally monotone module for the
example function $f_a(x) = x_a \wedge ( \neg x_b \vee x_c )$. As $x_a$ is
present in the local function, the two automaton composing this module
loop between each other and themselves.}
	\label{fig:local-monotone-module}
\end{figure}
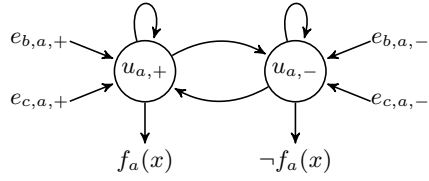

We define $U_{b, a}$ such that $U_{b, a} = T_a$ if the variable $x_b$ is
included in function $f_a$, and $U_{b, a} = \emptyset$ otherwise.

The encodings $\phi_a$ and $\phi_{b,a}$ for every $b$ such that
$U_{b, a} \neq \emptyset$ are defined by :
\begin{equation*}
\phi_a(x') = \phi_{b, a}(x') =\begin{array}\{{ll}.
	1 & \text{if } x'(u_{a, +}) = 1 \text{ and } x'(u_{a, -}) = 0\\
	0 & \text{if } x'(u_{a, +}) = 0 \text{ and } x'(u_{a, -}) = 1\\
	\bullet & \text{otherwise}
	\end{array}.
\end{equation*}

For every $b$ such that $U_{b, a} \neq \emptyset$, we define
$I_{b,a}(e_{b, a, +}) = u_{b, +}$ and $I_{b,a}(e_{b, a, -}) = u_{b, -}$. In
other words, the positive (resp. negative) value of automaton $b$ is given by
the value of the positive (resp. negative) node of the local module $M_b$.

\begin{lemma}
\label{monotony-simulation-lemma}
$M_a$ locally simulates $f_a$ in a input-first way.
\end{lemma}

Let $\Delta_a = \{T_a\}$ be an input-first way update mode for the module
$M_a$. Let $x$ be a configuration over $F$. Let $x'$ be a configuration over 
$T_a$ such that $\phi(x') = x_a$. Let $i'$ be an input configuration over $E_a$
such that for any $b \neq a$, $\phi_{b, a}(i' \circ I^{-1}_{b, a}) = x_b$.

Such a $x'$ verifies $x'(u_{a,+}) = x_a$ and $x'(u_{a, -}) = \neg x_a$.
Such a $i'$ verifies $i'(e_{b, a, +}) = x_b$ and $i'(e_{b, a, -}) = \neg x_b$ for
every $b \neq a$. Theses configurations are well defined.

To prove Lemma \ref{monotony-simulation-lemma}, we have to show that
$\phi_a(M_{a \Delta_a}(x' \sqcup i')) = f_a(x)$. This is equivalent to :

\begin{equation*}
\Leftrightarrow \begin{array}\{{ll}.
	M_a(u_{+, a})(x' \sqcup i') = f_a(x)\\
	M_a(u_{-, a})(x' \sqcup i') = \neg f_a(x)
\end{array}
\end{equation*}

\begin{equation*}
\Leftrightarrow \begin{array}\{{ll}.
	M_a(u_{+, a})(x' \sqcup i') = f_a(x)\\
	\neg M_a(u_{+, a})(x' \sqcup i') = \neg f_a(x)
\end{array}
\end{equation*}

\begin{multline*}
\Leftrightarrow M_a(u_{+, a})(x' \sqcup i') = f_a(x)\\
\Leftrightarrow f'_a (
a \mapsto x'(u_{a,+}) \sqcup
a^- \mapsto x'(u_{a,-}) \sqcup
\bigsqcup\limits_{b \neq a} 
\left(b \mapsto i'(e_{b,a,+}) \sqcup b^- \mapsto i'(e_{b,a,-}) \right))\\
= f_a(x).
\end{multline*}

We noticed earlier that $x'(u_{a,+}) = \neg x'(u_{a,-})$ and that
$i'(e_{b,a,+}) = \neg i'(e_{b,a,-})$ for every $a \neq b$. This implies that
our this evaluation of $f'_a$ can be developed as follows :

\begin{multline*}
f'_a (
a \mapsto x'(u_{a,+}) \sqcup
a^- \mapsto x'(u_{a,-}) \sqcup
\bigsqcup\limits_{b \neq a} 
\left(b \mapsto i'(e_{b,a,+}) \sqcup b^- \mapsto i'(e_{b,a,-}) \right))\\
= f_a(a \mapsto x'(u_{a,+}) \sqcup
\bigsqcup\limits_{b\neq a} b \mapsto i'(e_{b, a,+})) = f_a(a \mapsto x_a \sqcup 
\bigsqcup\limits_{b \neq a} b \mapsto x_b)\\
= f_a(x),
\end{multline*}

which concludes the proof of the Lemma \ref{monotony-simulation-lemma}. Using
this lemma, knowing that the Lemma \ref{monotony-lemma} implies the monotony
of each function in the local modules and the simple fact that local monotony
is not broken by any wiring, we use Theorem \ref{th-simulation} to conclude
this proof.\qed

\end{proof}

\end{document}